\documentclass[12pt,a4paper]{article}

\usepackage{amssymb}
\usepackage{amsmath, amsthm}
\usepackage{arydshln}
\usepackage{epsfig}
\usepackage{setspace}
\usepackage{comment}
\usepackage{geometry}
\usepackage{hyperref}
\usepackage{mathtools}
\usepackage{mleftright}\mleftright

\def\RR{\mathbb{R}}

\def\ZZ{\mathbb{Z}}
\def\QQ{\mathbb{Q}}

\def\KK{\mathbb{K}}
\def\11{\mathbf{1}}

\numberwithin{equation}{section}

\newcommand{\supp}{\mathop{\rm supp} }

\newcommand{\proj}{\mathop{\rm proj}}

\newcommand{\image}{{\rm Im}}

\newtheorem{Thm}{Theorem}[section]
\newtheorem{Prop}[Thm]{Proposition}
\newtheorem{Lem}[Thm]{Lemma}

\theoremstyle{definition}

\title{Horospherically convex optimization for  
fractional subspace packing and its applications}
\author{Hiroshi Hirai}

\begin{document}

\maketitle
\begin{abstract}
In this paper, we address a semi-infinite LP relaxation of the vector-subspace packing problem.
This is a higher-dimensional generalization of the fractional linear matroid parity problem 
and is closely related to Brascamp-Lieb polytopes.
We show that the dual of this LP can be formulated as ``linear programming on a Euclidean building," namely, the problem of minimizing a Busemann function over an intersection of horoballs.
This provides a natural example of horospherically convex optimization, recently introduced by Goodwin et al. (2026) and Criscitiello and Kim (2025).
By applying the incremental Busemann subgradient method, 
we obtain an additive FPTAS for the problem.
As applications, we obtain a new and simpler polynomial-time algorithm
for fractional linear matroid parity, and new algorithms
for the membership problem of Brascamp-Lieb polytopes.
\end{abstract}
Keywords: Horospherically convex optimization, geodesically convex optimization, Busemann function, Hadamard space, Euclidean building, fractional matroid parity, Brascamp-Lieb polytope\\

\noindent
MSC-classifications: 90C34, 68W40

\section{Introduction}
A vector-space generalization of the well-known {\em set packing problem} 
and its dual, the {\em hitting set problem} (or {\em set cover problem}) can be formulated as follows: 
Let $\KK$ be a field. We are given a collection of vector subspaces $S_1,S_2,\ldots,S_m$ in $\KK^n$. 
\begin{description}
    \item[Subspace packing problem:] 
    Find a subset $I \subseteq \{1,2,\ldots,m\}$ of indices with the maximum cardinality
    such that the vector-space sum $\sum_{k \in I}S_k$ is equal to the direct sum $\bigoplus_{k \in I}S_k$.
    \item[Hitting subspace problem:]
     Find a vector subspace $X$ in $\KK^n$ with the minimum dimension
     such that it
     intersects all $S_k$ nontrivially $(S_k \cap X \neq \{{\bf 0}\})$.
\end{description}
When each $S_k$ is a coordinate subspace, 
they are the set packing and hitting set problems, respectively.
As in this case,  
the two problems are in weak duality $\max_I |I| \leq \min_X \dim X$.
When  $\KK^n = U \oplus V$ and $S_k$ is the 2-dimensional subspace spanned by $u_k \in U, v_k \in V$ for $k=1,2,\ldots,m$, the subspace packing problem is the linear matroid intersection problem, 
and the strong duality is implied by the matroid intersection theorem.
For general $2$-dimensional subspaces $S_k$, though the strong duality does not hold,
the problem is known as the linear matroid parity problem and a famous result by Lov\'asz~\cite{Lovasz80} says that it is polynomial-time solvable.
When each $S_k$ is $d$-dimensional, 
the problem is viewed as the matroid $d$-parity problem~\cite{LeeSviridenkoVondrak2013} on a linearly represented matroid.
For the hitting subspace problem,
we could not find any work from an optimization point of view.
This may be because it appears to be an intractably difficult infinite discrete optimization problem.
We found one paper~\cite{Gadzinski} which studies this problem 
from different perspectives of low-rank matrix completion; 
the name is taken from that paper.

The focus of the present paper is their fractional relaxations.
If $\sum_{k \in I}S_k = \bigoplus_{k \in I}S_k$, then 
for every vector subspace $X$ it holds $\sum_{k \in I}S_k \cap X = \bigoplus_{k \in I}S_k \cap X$, 
and hence $\sum_{k \in I} \dim S_k \cap X = \dim \sum_{k \in I}S_k \cap X \leq \dim X$.
This leads to the following semi-infinite LP relaxation of the subspace packing problem:
\begin{description}
    \item[Fractional subspace packing problem:]
    \begin{eqnarray}
\mbox{Max.} && \sum_{k=1}^m x_k \nonumber \\
\mbox{s.t.} && \sum_{k=1}^m x_k \dim S_k \cap X \leq \dim X \quad ( X \in {\cal S}(\KK^n)), \nonumber\\
&& x_k \geq 0 \quad (k=1,2,\ldots,m), \label{eqn:fractional_subspace_packing}
\end{eqnarray}
\end{description}
where ${\cal S}(\KK^n)$ denotes the set of all vector subspaces of $\KK^n$.
This LP was introduced by Vande Vate~\cite{VandeVate92} 
as an LP relaxation of the matroid parity problem, 
where a feasible solution is a {\em fractional matroid matching}.
It was also considered by Lee, Sviridenko, and Vondr\'ak~\cite{LeeSviridenkoVondrak2013} for matroid $d$-parity.
Notice that the matroid for our setting 
is the (infinite) matroid of all vectors of $\KK^n$. 
As we will explain later, 
LP (\ref{eqn:fractional_subspace_packing}) has renewed interest due to 
its connections to the {\em noncommutative rank (nc-rank)}~\cite{IQS15a} and the {\em Brascamp-Lieb inequality}~\cite{BrascampLieb,Lieb}, 
which are the main motivation for this work.

On the other hand, from $X \cap S_k \neq \{{\bf 0}\} \Leftrightarrow \dim X \cap S_k \geq 1$, 
an LP relaxation of the hitting subspace problem is naturally obtained as:
\begin{description}
    \item[Fractional hitting subspace problem:]
    \begin{eqnarray}
\mbox{Min.} && \sum_{X \in {\cal S}(\KK^n)} \lambda(X) \dim X \nonumber\\
\mbox{s.t.} && \sum_{X \in {\cal S}(\KK^n)} \lambda(X) \dim S_k \cap X \geq 1 \quad (k \in [m]), \nonumber\\
&& \lambda: {\cal S}(\KK^n) \to \RR_{\geq0},\ |\supp \lambda| < \infty,
\label{eqn:fractional_hitting_subpace}
\end{eqnarray} 
\end{description}
where  $\supp \lambda$ denotes the nonzero support $\{X \in {\cal S}(\KK^n)\mid \lambda(X) \neq 0\}$.
We call a feasible solution a {\em fractional hitting subspace}.
Observe that (\ref{eqn:fractional_hitting_subpace}) is LP-dual to (\ref{eqn:fractional_subspace_packing}). 
Furthermore, the strong duality holds, 
as they are implicitly finite-dimensional LPs. Indeed, the possible coefficients of inequalities of these two LPs are finite:
\begin{equation}\label{eqn:coefficient_vector}
(\dim X, \dim S_1 \cap X, \ldots, \dim S_m \cap X) \in \{0,1,2\ldots,n\}^{m+1}.
\end{equation}
The optimal value (the {\em fractional packing/hitting number}) is denoted by $\tau^*$.
Although they reduce to finite-dimensional LPs, 
executing the reduction is extremely difficult; determining the possible coefficients 
requires to solve algebraic equations exponentially many times. 
We know nothing about 
the computational complexity of these LPs, 
except for the $2$-dimensional case, namely, the fractional linear matroid parity problem.
For this case, Chang, Llewellyn, and Vande Vate~\cite{CLV01a,CLV01b} gave a polynomial-time algorithm ({\em CLV-algorithm}), 
and Oki and Soma~\cite{OkiSoma_SICOMP} gave a faster randomized algorithm based on the connection to the nc-rank.

In this paper, we tackle the semi-infinite LP~(\ref{eqn:fractional_hitting_subpace}) for the general case of $S_k$, 
where computation on an abstract field $\KK$ is performed by an algebraic RAM. 
That is, we assume:
\begin{description} 
    \item[(A)] Every element in $\KK$ is kept in constant space, 
    and arithmetic operations $+,-,\times,/$ (and comparison) in $\KK$ are done in constant time.  
\end{description}

The main result of this paper 
is an additive FPTAS for the fractional hitting subspace problem (\ref{eqn:fractional_hitting_subpace}):
\begin{Thm}\label{thm:main}
For $\epsilon > 0$, we can compute in $O(n^5m^3 \epsilon^{-2})$ time a fractional hitting subspace $\lambda$ with
\begin{equation}
\sum_{X} \lambda(X)\dim X  \leq \tau^* + \epsilon.
\end{equation}
\end{Thm}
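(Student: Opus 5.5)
Via the Euclidean-building picture from the abstract, the plan is to recast the hitting LP (\ref{eqn:fractional_hitting_subpace}) as minimizing a Busemann function over an intersection of horoballs, and then run an incremental Busemann subgradient method.

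\emph{Reformulation.} A finitely supported $\lambda$ can be assumed (by uncrossing) to be supported on a chain $X_1\subset X_2\subset\cdots\subset X_n$, i.e.\ on a complete flag. Fix a basis $g\in GL_n(\KK)$ adapted to the flag and record the weights as a vector $\alpha\in\RR^n$ with $\alpha_1\ge\cdots\ge\alpha_n\ge 0$, where $\alpha_i-\alpha_{i+1}=\lambda(X_i)$. This is a point $p$ of the Euclidean building (a point in an apartment, i.e.\ a norm/valuation on $\KK^n$). In these coordinates:
\begin{itemize}
\item the objective $\sum_X\lambda(X)\dim X=\sum_i\alpha_i$ is the Busemann function $b_\xi(p)$ toward a fixed boundary point $\xi$ (the ``trace'' direction);
\item each constraint $\sum_X\lambda(X)\dim(S_k\cap X)\ge 1$ reads $f_k(p):=\max$ over adapted bases of the sum of the top $\dim S_k$ weights restricted to $S_k$ $\ge 1$, the restriction of the norm to $S_k$.
\end{itemize}
The function $-f_k$ should again be a Busemann function, namely of the ideal point determined by $S_k$ via $\det$ of the restricted lattice. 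So the feasible region is an intersection of horoballs, and the problem is horospherically convex.

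\emph{Algorithm.} I will use a penalty/Lagrangian form $\min_p\, b_\xi(p)+M\sum_k\max(0,1-f_k(p))$, or alternatively bisection on the target value $\tau$ with a feasibility objective $\sum_k\max(0,\cdot)$. I then run the incremental Busemann subgradient method of Goodwin et al.\ / Criscitiello--Kim, which cycles over $k$. Each step takes a subgradient of a single Busemann term and moves along the geodesic ray toward its ideal point; in building terms this is a lattice modification along $S_k$. Its standard guarantee is error $\le \frac{D^2}{2\eta T}+\eta G^2 m/2$ after $T$ passes. The Busemann functions are $1$-Lipschitz, and the weights are $O(\sqrt n)$ per term, so $G=O(\sqrt n\,M)$ with $M=O(n)$ sufficing. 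The distance $D$ from the origin to an optimal point is $O(\sqrt n\,\tau^*)\le O(n^{1.5})$, since $\tau^*\le n$ (take $\lambda(\KK^n)=1$). Choosing $\eta$ yields $T=O(\mathrm{poly}(n)\,m\,\epsilon^{-2})$ iterations. Each iteration computes $f_k$ and its subgradient, which amounts to diagonalizing the restriction to $S_k$ relative to the current basis via Gaussian elimination in $O(n^3)$ arithmetic operations under (A). Balancing these factors should give $O(n^5m^3\epsilon^{-2})$.

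\emph{Rounding.} The output point gives $\lambda$ with objective $\le\tau^*+\epsilon/2$ and constraints violated by at most $\delta$. Scaling $\lambda$ by $1/(1-\delta)$, which in the building is a homothety, restores feasibility at additive cost $\le n\delta$. Taking $\delta=\epsilon/(2n)$ finishes the proof.

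I expect the \textbf{main obstacle} to be the identification of $f_k$ as a Busemann function, together with computing its subgradient ray. This requires showing that $\max$ over adapted bases of $\dim(S_k\cap X)$-weighted sums equals a limit of distances to points on a ray tied to $S_k$. I would attempt it inside a common apartment containing both $p$ and a flag compatible with $S_k$, reducing the problem to the linear case in $\RR^n$ via sorting.
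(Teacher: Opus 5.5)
Your structural reduction matches the paper's. You uncross to a complete flag, set $p_i=\sum_{j\ge i}\lambda(X_j)$, read the objective as $b_{-\KK^n}$ and each constraint as $b_{S_k}(p,\mathcal X)\le -1$ (computed in an apartment adapted to $\mathcal X$ and $S_k$), penalize, and run the incremental Busemann subgradient method with an $O(n^3)$ Gaussian-elimination step per component.

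\textbf{The gap: the running time.} The quantitative part is the whole content of the stated bound, and your parameters do not reach it.
\begin{itemize}
\item For $m+1$ components the incremental method needs $O(m^2L^2D^2\epsilon^{-2})$ passes; the step-size error term carries $(m+1)^2L^2$, not $mL^2$. Each pass costs $O(mn^3)$, so $O(n^5m^3\epsilon^{-2})$ requires $L^2D^2=O(n^2)$.
\item Your penalty weight $M=O(n)$ gives $L=O(n^{3/2})$. With your $D=O(n^{3/2})$ this yields about $O(n^9m^3\epsilon^{-2})$. Even the sharper $D\le\tau^*\le n$ gives $O(n^8m^3\epsilon^{-2})$. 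No balancing of step sizes recovers $n^5$.
\end{itemize}

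\textbf{A missing projection.} You never say how iterates stay in $\{p_n\ge 0\}$. The objective's subgradient step $p\mapsto p-t\mathbf 1$ pushes $\lambda(\KK^n)=p_n$ negative. Over the whole building the penalized function can be unbounded below: if all $S_k$ lie in a hyperplane $H$, take $X_{n-1}=H$ and let $p_n\to-\infty$. So a projection step is indispensable.

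\textbf{How the paper closes both issues.} It uses two elementary observations.
\begin{itemize}
\item \emph{A box constraint.} Every coefficient $\dim(S_k\cap X_i)-\dim(S_k\cap X_{i-1})$ is $0$ or $1$, and every right-hand side is $1$. So replacing $p$ by $\min(\mathbf 1,p)$ keeps feasibility without increasing $\sum_ip_i$. Hence one may optimize over $[\mathbf 0,\KK^n]=\{p_1\le1,\ p_n\ge0\}$. This is a closed convex set of diameter $\sqrt n$, and its projection is coordinatewise clipping of $p$ to $[0,1]$ inside its own chamber. This gives $D=\sqrt n$ and supplies the missing projection.
\item \emph{Penalty weight $1$ is exact, and repair costs nothing.} Suppose $\alpha:=1+b_{S_k}(p,\mathcal X)>0$. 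Pass to an apartment $\mathcal A(v_1,\dots,v_n)$ with $X_i=\langle v_1,\dots,v_i\rangle$ and $S_k=\langle v_{j_1},\dots,v_{j_{\dim S_k}}\rangle$, and raise the coordinate $p_{j_1}$ by $\alpha$. This raises $\sum_ip_i$ by exactly $\alpha$, removes the $k$-th violation, and increases no other violation, since in apartment coordinates each $-b_{S_{k'}}$ is coordinatewise nondecreasing. Doing this for all $k$ takes $O(mn^3)$ time. It converts any point into a fractional hitting subspace whose value is at most the weight-$1$ penalized objective.
\end{itemize}

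With these, $L=\sqrt n$ suffices and you need only penalized accuracy $\epsilon$. Both the $\delta=\epsilon/(2n)$ violation control and the homothety rounding become unnecessary. The pass count is then $O(n^2m^2\epsilon^{-2})$, and the total is $O(n^5m^3\epsilon^{-2})$.
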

Since $\tau^* \geq 1$, this also yields a multiplicative FPTAS. 
Note that LP~(\ref{eqn:fractional_hitting_subpace}) is very fractional; 
it has no (half-)integrality property of extreme points.
To the best of our knowledge, there are no comparable results
on the computational complexity for such semi-infinite linear programming.

For proving this result, we utilize 
geodesically convex optimization 
on nonpositively curved spaces called {\em Hadamard spaces}. 
Deriving an explicit computational complexity
via optimization on such spaces was pioneered by the work of
Garg et al.\cite{GGOW15} who developed a polynomial-time algorithm for the nc-rank 
by {\em operator scaling}. 
It turned out that operator scaling is a convex optimization on the Hadamard manifold of positive definite matrices; see e.g., \cite{BFGOWW} for subsequent wonderful developments.
Another polynomial-time algorithm for nc-rank due to Hamada and Hirai~\cite{HamadaHirai} is based on convex optimization on a 
{\em Euclidean building}---a representative example of Hadamard spaces without manifold structure. 
Our approach to (\ref{eqn:fractional_hitting_subpace}) is fully inspired by this work.   
By the well-known uncrossing argument, 
the support of $\lambda$ in (\ref{eqn:fractional_hitting_subpace}) can be assumed to be a {\em flag}---a chain of vector subspaces.
Then, the (nonconvex) subset of feasible solutions with flag-support has the structure of a Euclidean building. 
Further, the objective function is geodesically convex on this space.
As done in~\cite{HamadaHirai}, 
the splitting proximal point algorithm by Ba\v{c}\'{a}k~\cite{Bacak} and Ohta-P\'alfia~\cite{OhtaPalfia} is applicable to obtain an approximate solution.  
However, due to the lack of the strong convexity of the objective, 
the desired complexity estimate cannot be deduced. 
Instead, we exploit another convexity property, 
called the {\em horospherical convexity} (or {\em Busemann subdifferentiability}).
This notion was introduced independently by Goodwin et al.~\cite{GoodwinLewisGenaroNicolae2026} and Criscitiello and Kim~\cite{CriscitielloKim2025} for 
overcoming the curvature obstruction in first-order methods in Hadamard spaces/manifolds; see also 
a preceding work~\cite{GoodwinLewisGenaroNicolae2026mor}.
Analogously to Euclidean convex functions, 
{\em horospherically convex functions} are defined as those supported everywhere 
by affine-function analogues in Hadamard spaces, called {\em Busemann functions}.
We show that (\ref{eqn:fractional_hitting_subpace}) is naturally formulated as ``{\em linear programming on  Euclidean building}"---the problem of 
minimizing a Busemann function over a convex set defined by Busemann functions (i.e., intersection of {\em horoballs}).
This is an ideal example of horospherically convex optimization 
for which the {\em incremental Busemann subgradient method}~\cite{GoodwinLewisGenaroNicolae2026} 
is applicable.
Then Theorem~\ref{thm:main} is obtained from the convergence result in \cite{GoodwinLewisGenaroNicolae2026}.
Since the derivation is surprisingly natural,
we believe that (\ref{eqn:fractional_hitting_subpace}) 
will serve as a standard example of horospherically convex optimization in the future.

The rest of this paper is organized as follows.
In Section~\ref{sec:preliminaries}, 
we introduce necessary terminologies, lemmas, and 
tools on Hadamard spaces, horospherically 
convex optimization, and Euclidean buildings.
In Section~\ref{sec:formulation&algorithm},  
we consider a slightly general version of (\ref{eqn:fractional_hitting_subpace}), 
formulate it as a horospherically convex optimization, 
and obtain a refined version of Theorem~\ref{thm:main}.
This also yields an FPT-algorithm computing $\tau$, where the dimension $n$ is the fixed parameter.
In Section~\ref{sec:application}, 
we present applications. The first application is 
a new deterministic polynomial-time algorithm for solving  
fractional linear matroid parity.
This is an immediate application of the incremental subgradient method and the half-integrality of the problem.
The resulting algorithm is slower than the CLV-algorithm~\cite{CLV01a,CLV01b} but considerably simpler.  
The second application concerns the computational complexity of {\em Brascamp-Lieb polytopes (BL-polytopes)}~\cite{BCCT08}, 
which is our motivating target.
The BL-polytope is the parameter space for which the BL-inequality is nontrivial.
Since the pseudo-polynomial time membership algorithm by Garg et al.~\cite{GGOW18},
the computational complexity of BL-polytopes has attracted interest in 
theoretical computer science.
As pointed out by Franks, Soma, and Goemans~\cite{FranksSomaGoemans2022}, 
the BL-polytope is the face of the feasible region of (\ref{eqn:fractional_subspace_packing})
determined by $\sum_{k=1}^m x_k \dim S_k \cap X = \dim X$ for $X = \KK^n (=\RR^n)$.
By using our results, we present new algorithms for 
approximate membership of BL-polytopes.
However, assumption (A) may not be appropriate for studying the computational complexity of BL-polytopes. 
Also our algorithms have no guarantee of polynomial bit-complexity when
dealing with rational inputs in the standard RAM model, unfortunately.
Nevertheless, they are conceptually different from existing operator-scaling-based algorithms~\cite{FranksSomaGoemans2022,GGOW18}, and are expected to stimulate future research.  
In the additional Section~\ref{sec:set_packing/cover}, we verify that our problems actually generalize fractional set packing and cover problems. 
Specifically, 
we show the not-quite-obvious fact that if each $S_k$ is a coordinate subspace,
then $\supp \lambda$ in (\ref{eqn:fractional_hitting_subpace}) can be assumed to consist of coordinate subspaces.
We point out as a consequence of this fact that the BL-polytope defined by coordinate projections
is equal to the perfect fractional set-packing polytope, 
which sharpens a result of Finner~\cite{Finner1992} on a generalization of H\"older's inequality.

\section{Preliminaries}\label{sec:preliminaries}

For positive integers $m,n$, let $[n] := \{1,2,\ldots,n\}$ and $[m,n] := \{m,m+1,\ldots,n\}$.
For a subset $J   \subseteq [n]$, 
let ${\bf 1}_{J} \in \RR^n$ denote the characteristic vector
defined by $({\bf 1}_{J})_i := 1$ if $i \in J$ and $({\bf 1}_{J})_i := 0$ otherwise.
If $J  = \{j_1,j_2,\ldots,j_k\}$, then ${\bf 1}_{J}$ is also written as 
${\bf 1}_{j_1,j_2,\ldots,j_k}$.
The all-one vector ${\bf 1}_{[n]}$ is simply denoted by ${\bf 1}$. The zero vector is denoted by ${\bf 0}$.
For two vectors $p,q \in \RR^{n}$, define $\min (p,q), \max (p,q) \in \RR^n$ 
by $(\min (p,q))_i := \min (p_i,q_i)$, $(\max (p,q))_i := \max (p_i,q_i)$ $(i \in [n])$. 
Let $\RR^n_{\downarrow}$ denote the set of vectors 
$p \in \RR^n$ with $p_1 \geq p_2 \geq \cdots \geq p_n$.

\subsection{Linear algebra}
For vectors $v_1,v_2,\ldots,v_k \in \KK^n$, 
let $\langle v_1,v_2,\ldots,v_k \rangle$ denote the vector subspace spanned by $v_1,v_2,\ldots,v_k$.
The following property ({\em supermodularity}) 
of the dimension function is elementary.
\begin{Lem}[{see \cite[Lemma 4.5]{HIOS2025}}]
\label{lem:supermodular}
Let $S,X,Y \in  {\cal S}(\KK^n)$ be vector subspaces.
\begin{itemize}
\item[(1)] $\dim S \cap X + \dim S \cap Y \leq \dim S \cap (X \cap Y) + \dim S \cap (X + Y)$.
\item[(2)] $(\dim X)^2 + (\dim Y)^2 < (\dim X \cap Y)^2 + (\dim X + Y)^2$ if $X \not \subseteq Y$ and  $Y \not \subseteq X$. 
\end{itemize}
\end{Lem}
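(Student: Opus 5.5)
For (1), I will pass to the subspaces $A := S \cap X$ and $B := S \cap Y$. They satisfy $A \cap B = S \cap (X \cap Y)$. They also satisfy $A + B \subseteq S \cap (X+Y)$, since both $A$ and $B$ lie in $S$ and in $X+Y$. The modular law for dimensions of vector subspaces then gives
\[
\dim S\cap X + \dim S \cap Y = \dim (A\cap B) + \dim(A+B) \leq \dim S\cap(X\cap Y) + \dim S \cap (X+Y),
\]
which is (1). The only point to verify is the inclusion $A+B\subseteq S\cap(X+Y)$; the reverse inclusion fails in general, which is why (1) is an inequality rather than an equality.

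For (2), I will write $a := \dim X$, $b := \dim Y$, $c := \dim X\cap Y$, $d := \dim (X+Y)$. The modular law gives $a+b = c+d$, and clearly $c \le \min(a,b)$ and $d \ge \max(a,b)$. If $X\not\subseteq Y$, then $X \cap Y$ is a proper subspace of $X$, so $c < a$. If $Y\not\subseteq X$, then likewise $c<b$. Hence $c < \min(a,b)$, and by $d = a+b-c$ we get $d > \max(a,b)$.

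Now set $s := a+b = c+d$. Then
\[
c^2+d^2 - a^2 - b^2 = \tfrac12\big((d-c)^2 - (a-b)^2\big),
\]
using $x^2+y^2 = \tfrac12(s^2 + (x-y)^2)$ for $x+y=s$. Since $d-c > \max(a,b)-\min(a,b) = |a-b| \ge 0$, this quantity is strictly positive, giving (2).

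There is no real obstacle. The only care needed in (2) is to deduce strictness from the hypothesis that neither subspace contains the other, which is exactly what forces $c$ to be strictly below $\min(a,b)$.
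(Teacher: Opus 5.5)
Your proof is correct. Part (1) is exactly the paper's argument: the dimension formula applied to $S\cap X$ and $S\cap Y$, together with the inclusion $(S\cap X)+(S\cap Y)\subseteq S\cap(X+Y)$.

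For part (2) your route differs somewhat from the paper's. The paper first extends $X$ and $Y$ to complete flags and uses the common-basis property of Lemma~\ref{lem:apartment}, so that $X$ and $Y$ are spanned by index sets $I,J$ of one basis. The four dimensions then become $|I|,|J|,|I\cap J|,|I\cup J|$, and the paper concludes by strict convexity of $x\mapsto x^2$. You obtain the same three ingredients directly: $a+b=c+d$ from the dimension formula, and $c<\min(a,b)$, $d>\max(a,b)$ because $X\cap Y$ is a proper subspace of both $X$ and $Y$. You then replace the convexity appeal by the explicit identity $c^2+d^2-a^2-b^2=\tfrac12\bigl((d-c)^2-(a-b)^2\bigr)$.

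Your version is more self-contained. It needs no simultaneous-basis lemma, which in the paper is a forward reference. It also states explicitly the equal-sum condition $a+b=c+d$ that the paper's convexity step uses only tacitly. The paper's version, in turn, makes the analogy with the set inequality for $|I|,|J|$ visible, which fits its later use of apartments.
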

\begin{proof}
(1). Use $\dim U+ \dim V = \dim U \cap V + \dim U+V$ and $\dim ((S \cap X) + (S \cap Y)) \leq \dim S \cap (X +  Y)$.
(2). There is a basis $v_1,v_2,\ldots,v_n$
such that $X =\langle v_i \mid  i \in I \rangle$ and $Y = \langle v_j \mid j \in J\rangle$ for some $I,J \subseteq [n]$ ; see Lemma~\ref{lem:apartment} below. Then 
$\dim X = |I|$, $\dim Y= |J|$, $\dim X \cap Y = |I \cap J|$, and $\dim X+Y = |I \cup J|$.
Then the statement follows 
from the strict convexity $x \mapsto x^{2}$ 
and $|I \cap J|< \min (|I|,|J|) \leq \max (|I|,|J|)  < |I\cup J|$.
\end{proof}

A {\em flag} is a family of vector subspaces $X_1,X_2,\ldots,X_m$ with $X_1 \subset X_2 \subset \cdots \subset X_m$.
If $m = n$, $\dim X_i = i$, and $X_n = \KK^n$, then
it is called a {\em complete flag}.
A complete flag is denoted by a calligraphic letter, such as ${\cal X}$, 
where the vector subspace with dimension $i$ in ${\cal X}$ is denoted by $X_i$.
The following property is also standard. 
\begin{Lem}[{See e.g., \cite[Part II, Lemma 10.80]{BridosonHaedfliger_Nonpositive}}]\label{lem:apartment}
For two complete flags ${\cal X}, {\cal Y}$, 
there is a basis $v_1,v_2,\ldots,v_n$ of $\KK^n$ 
such that every vector subspace in ${\cal X} \cup {\cal Y}$
is spanned by some subset of $v_1,v_2,\ldots,v_n$. 
\end{Lem}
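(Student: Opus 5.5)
We prove the statement by induction on $n$: we construct a basis adapted to ${\cal Y}$ step by step, extending it one vector at a time so that it stays compatible with ${\cal X}$. This is the linear-algebra form of the Bruhat decomposition, or equivalently of the Jordan--H\"older/Zassenhaus refinement. The plan is to build the basis directly from the double filtration $X_i \cap Y_j$.

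For $i,j \in [0,n]$ put $Z_{ij} := X_i \cap Y_j$, with $X_0 = Y_0 = \{{\bf 0}\}$. For each $j$, the chain $Z_{0j} \subseteq Z_{1j} \subseteq \cdots \subseteq Z_{nj} = Y_j$ is a filtration of $Y_j$. Passing to the quotient $Y_j / Y_{j-1}$, which is one-dimensional, the images of $Z_{ij}$ increase from $0$ to everything. Hence there is a unique index $\sigma(j)$ at which the image first becomes nonzero, i.e. the smallest $i$ with $Z_{ij} \not\subseteq Y_{j-1}$. Pick $v_j \in Z_{\sigma(j) j} \setminus Y_{j-1}$.

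Key steps, in order:
\begin{itemize}
\item[(i)] The vectors $v_1,\ldots,v_j$ form a basis of $Y_j$. This holds because $v_j \in Y_j \setminus Y_{j-1}$ for every $j$, so induction on $j$ gives $Y_j = \langle v_1,\ldots,v_j\rangle$, and each $Y_j$ is spanned by a subset of the $v$'s.
\item[(ii)] The map $\sigma$ is a permutation of $[n]$.
\item[(iii)] $X_i = \langle v_j \mid \sigma(j) \leq i \rangle$.
\end{itemize}
Since $v_j \in X_{\sigma(j)}$, the inclusion $\supseteq$ in (iii) is immediate. Thus both (ii) and (iii) follow once we show $|\{j \mid \sigma(j) \le i\}| \geq i$, because $\dim X_i = i$ and the $v_j$ are linearly independent.

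For this count, note that $\dim X_i = \sum_{j=1}^n \bigl(\dim Z_{ij} - \dim Z_{i,j-1}\bigr)$. Each summand is $0$ or $1$, since $Z_{ij}/Z_{i,j-1}$ embeds into $Y_j/Y_{j-1}$. A summand equals $1$ exactly when $Z_{ij} \not\subseteq Y_{j-1}$, which by definition of $\sigma(j)$ means $\sigma(j) \leq i$. Hence $|\{j \mid \sigma(j)\le i\}| = \dim X_i = i$.

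Taking $i$ and $i-1$ then shows that exactly one $j$ has $\sigma(j)=i$, so $\sigma$ is a bijection. Combined with linear independence, this gives equality in (iii).

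The main obstacle is the counting step, i.e. the identity $\dim X_i = |\{j \mid \sigma(j)\le i\}|$. Everything there hinges on the observation that the dimension jump of $X_i\cap Y_j$ in $j$ is exactly the indicator of $\sigma(j)\le i$. This uses monotonicity in $i$: if $Z_{ij}\not\subseteq Y_{j-1}$ then the same holds for all larger $i$, so the threshold $\sigma(j)$ is well defined. Once that is checked carefully, the rest is bookkeeping, and no assumption on the field $\KK$ is needed.
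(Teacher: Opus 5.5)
Your proof is correct and complete. The telescoping identity $\dim X_i=\sum_j(\dim Z_{ij}-\dim Z_{i,j-1})$, the embedding $Z_{ij}/Z_{i,j-1}\hookrightarrow Y_j/Y_{j-1}$ (whose kernel is $Z_{ij}\cap Y_{j-1}=Z_{i,j-1}$), and the monotonicity of $Z_{ij}$ in $i$ together give $|\{j\mid \sigma(j)\le i\}|=i$. With the linear independence from (i), this yields both that $\sigma$ is a permutation and that $X_i=\langle v_j\mid \sigma(j)\le i\rangle$.

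The paper gives no proof of this lemma; it only cites Bridson--Haefliger. Your argument is therefore a self-contained substitute rather than a variant of something in the text. The citation places the lemma in its natural building-theoretic setting (any two chambers lie in a common apartment) at no cost in length. Your argument instead does the following:
\begin{itemize}
\item It makes visible that nothing about $\KK$ is used.
\item It exhibits the relative-position permutation $\sigma$ of the two chambers explicitly.
\item It is constructive, since choosing $v_j\in (X_{\sigma(j)}\cap Y_j)\setminus Y_{j-1}$ is a Gaussian-elimination computation.
\end{itemize}

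Your construction is the general form of the echelon argument the paper itself uses in Lemma~\ref{lem:n^3}. There the pivot indices $j_1<\cdots<j_k$ are exactly the levels where $\dim(S\cap X_i)$ jumps, and the new basis vectors $u_{j_h}$ lie in $(S\cap X_{j_h})\setminus X_{j_h-1}$. This is your choice of $v_j$ with the paper's ${\cal X}$ in the role of your ${\cal Y}$, and the partial flag $\{{\bf 0}\}\subset S\subset \KK^n$ in the role of your ${\cal X}$.

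One cosmetic point: your opening sentence announces an induction on $n$. What you actually carry out is a direct construction, with only a trivial induction on $j$ in step (i), so that sentence should be reworded.
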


We need an algorithm for constructing such a basis in the following special case.
\begin{Lem}\label{lem:n^3}
Let $v_1,v_2,\ldots,v_n$ be a basis of $\KK^n$ and let $s_1,s_2,\ldots,s_k$ 
be linearly independent vectors in $\KK^n$.
We can compute by $O(n^3)$ arithmetic operations of $\KK$ a basis $u_1,u_2,\ldots,u_n$ and indices $j_1,j_2,\ldots,j_k \in [n]$
such that
$\langle v_1,v_2,\ldots,v_i \rangle = \langle u_1,u_2,\ldots,u_i \rangle$ for $i \in [n]$ 
and $\langle s_1,s_2,\ldots,s_k \rangle = \langle u_{j_1},u_{j_2},\ldots u_{j_k}\rangle$.
\end{Lem}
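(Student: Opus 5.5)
We reduce the statement to Gaussian elimination on the coordinates of the $s_j$ with respect to the basis $v_1,\ldots,v_n$, done so that every column operation respects the flag $\langle v_1\rangle\subset\langle v_1,v_2\rangle\subset\cdots$.

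First, form the $n\times n$ matrix $V$ with columns $v_1,\ldots,v_n$ and compute coordinates $A = V^{-1}[s_1\,\cdots\,s_k]$, an $n\times k$ matrix of rank $k$. This costs $O(n^3)$ operations. Write $a_j$ for the $j$-th column of $A$ and define the \emph{height} of a nonzero vector $a$ as $\max\{i \mid a_i\neq 0\}$.

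Next, transform $a_1,\ldots,a_k$ by invertible column operations, which do not change their span, into vectors $b_1,\ldots,b_k$ with pairwise distinct heights $j_1,\ldots,j_k$. We process the columns in turn. Given the current vector, while its height equals the height $j_l$ of an already finalized $b_l$, we subtract a multiple of $b_l$ to kill that entry. Each subtraction strictly lowers the height, and linear independence prevents the vector from becoming zero. This is echelon form from the bottom, and it costs $O(nk^2)\subseteq O(n^3)$ operations if we also keep the finalized vectors reduced, i.e.\ $b_l$ has zero entries at the heights of the other $b$'s; that reduction is not needed for correctness.

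Now define the new basis in coordinates: $u'_{j_l} := b_l$ for $l\in[k]$, and $u'_i := e_i$ (the $i$-th unit vector) for $i\notin\{j_1,\ldots,j_k\}$. We claim $\langle u'_1,\ldots,u'_i\rangle=\langle e_1,\ldots,e_i\rangle$ for every $i$. Each $u'_p$ with $p\le i$ has height $p\le i$, so it lies in $\langle e_1,\ldots,e_i\rangle$. Also, the matrix $[u'_1\cdots u'_n]$ is upper triangular with nonzero diagonal, because $u'_p$ has height exactly $p$ with nonzero entry there. Hence the $u'_1,\ldots,u'_i$ are independent, and a dimension count gives equality. Finally set $u_i := Vu'_i$, which costs $O(n^3)$. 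Then $\langle u_1,\ldots,u_i\rangle=\langle v_1,\ldots,v_i\rangle$, and $\langle u_{j_1},\ldots,u_{j_k}\rangle = V\langle b_1,\ldots,b_k\rangle=V\langle a_1,\ldots,a_k\rangle=\langle s_1,\ldots,s_k\rangle$.

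The main point needing care is the elimination step. We must guarantee that only the $k$ columns are modified, using each other, and never combined with the unit vectors. Otherwise the span would change. The operation count of the reduction also has to stay within $O(n^3)$; it is $O(nk^2)$ above, which suffices.
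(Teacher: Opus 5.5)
Your proposal is correct and follows essentially the same route as the paper's proof: you take the coordinates $V^{-1}[s_1\,\cdots\,s_k]$, bring them to a column echelon form whose pivot rows are the heights $j_1,\ldots,j_k$, replace $e_{j_l}$ by the echelon columns to get an upper triangular matrix with nonzero diagonal, and map back by $V$. One small point: the $O(nk^2)$ count does not depend on keeping the finalized vectors reduced. The height strictly decreases during elimination, so each column meets each finalized height at most once.
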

\begin{proof}
The algorithm is based on Gaussian elimination.
Let $V := (v_1\ v_2\ \cdots \ v_n)$ and $S := (s_1\ s_2\ \cdots \ s_k)$, and  
consider the block matrix $(V\ S)$.
Multiply $V^{-1}$ from the left to obtain $(I\ V^{-1}S)$.
Next compute the column echelon form $L$ of $V^{-1}S$, 
where there are $1 \leq j_1 < j_2 < \cdots < j_k \leq n$ 
such that $L_{ih} = 0$ for $i > j_h$ and $h \in [k]$.  
Let $e_1,e_2,\ldots,e_n$ be the column vectors of $I$(the standard basis).
Define another basis $e'_1,e'_2,\ldots,e'_n$ 
by $e'_j := \ell_i$ if $j = j_i$ and $e'_j := e_j$ otherwise, 
where $\ell_i$ is the $i$-th column vector of $L$.
Since $(e'_1\ e'_2\ \cdots e'_n)$ is an upper triangular matrix 
with nonzero diagonals, it holds $\langle e_1,e_2,\ldots,e_i \rangle = \langle e'_1,e'_2,\ldots,e'_i \rangle$
for $i \in [n]$.
Also $\langle e'_{j_1},e'_{j_2},\ldots, e'_{j_k}\rangle = \langle V^{-1}s_j \mid j \in [k]\rangle$.
Then we obtain a desired basis $u_i := V e'_i$ $(i \in [n])$ and indices $j_1,j_2,\ldots,j_k$. 
This requires $O(n^3)$ arithmetic operations.
\end{proof}


\subsection{Hadamard spaces}
We recall the basic notions of Hadamard spaces and Busemann functions; see \cite{Bacak,BridosonHaedfliger_Nonpositive} for details.
In a metric space $({\cal H}, d)$, a {\em geodesic} is a path $\gamma:[a,b] \to {\cal H}$
such that $d(\gamma(s),\gamma(t)) = d(\gamma(a),\gamma(b))|s-t|/|a-b|$ holds for $s,t \in [a,b]$, 
where $s_\gamma := d(\gamma(a),\gamma(b))/|a-b|$  is called the {\em speed} of $\gamma$.
If $s_\gamma =1$, then $\gamma$ is called a unit-speed geodesic. 
Note that $t \mapsto \gamma(t/s_{\gamma})$ is a unit-speed geodesic for any geodesic $\gamma$.
The constant function $t \mapsto p$ is regarded as a geodesic with zero speed.

A {\em geodesic metric space} is a metric space $({\cal H}, d)$ 
such that every pair of points $x,y$ is connected by a geodesic.
A geodesic metric space $({\cal H},d)$ is called {\it CAT(0)} if for every $y \in {\cal H}$ and geodesic $\gamma:[a,b] \to {\cal H}$
the function $t \mapsto \frac{1}{2}d(y,\gamma(t))^2$
is $1$-strongly convex.
In a CAT(0) space, a geodesic connecting any pair of points is uniquely determined.
A complete CAT(0) space is particularly called a {\em Hadamard space}.

Let $({\cal H},d)$ be a Hadamard space. 
 A subset $K \subseteq {\cal H}$ is called  {\em convex} if
a geodesic connecting any pair of points in $K$ belongs to $K$.
Let $K \subseteq {\cal H}$ be a closed convex set.
For a point $x$ in ${\cal H}$,  
there is a unique point $x^*$ in $K$ 
having the minimum distance $\min_{y\in K} d(x,y)$ from $x$ to $K$.
This gives rise to the {\em projection} $\proj_K:{\cal H} \to K$
defined by $x \mapsto x^*$.

A function $f: {\cal H} \to \RR$ is called {\em convex} 
if for every geodesic $\gamma: [a,b] \to {\cal H}$
one-dimensional function $t \mapsto f(\gamma(t))$ is convex.
We recall an important class of convex functions, called {\em Busemann functions}, 
which are determined by geodesic rays.
By a {\em geodesic ray} we mean $\gamma:[0,\infty) \to {\cal H}$ 
such that the restriction $\gamma|_{[a,b]}$ for each $a,b \in [0,\infty)$ is a geodesic, where 
the speed of $\gamma$ is (well-)defined by $s_\gamma := d(\gamma(a),\gamma(b))/|a-b|$ $(a \neq b)$.
The Busemann function $b_{\gamma}:{\cal H} \to \RR$ associated with a geodesic ray $\gamma$ is defined by
\begin{equation}
b_{\gamma}(x) := s_{\gamma} \lim_{t \to \infty} d(x,\gamma(t/s_{\gamma})) - t \quad (x \in {\cal H}),
\end{equation}
where we let $b_{\gamma}(x) := 0$ if $s_\gamma = 0$.
Busemann functions are analogues of affine functions in Euclidean space.
To see this, suppose that ${\cal H} = \RR^n$ and $\gamma(t) = a + v t$ for 
$a,v \in \RR^n$ with $v \neq {\bf 0}$.
Letting $u := v/{\|v\|_2}$, we have
\begin{eqnarray}
&& b_{\gamma}(x) = \|v\|_2 \lim_{t \to \infty} \|x - a - ut  \|_2 - t = 
\|v\|_2 \lim_{t \to \infty} \frac{\|x - a - ut  \|^2_2 - t^2}{\|x - a - ut  \|_2 + t } \nonumber\\
&& = \|v\|_2 \lim_{t \to \infty} \frac{\|x-a\|_2^2- 2u^{\top}(x-a) t}{\|x - a - ut  \|_2 + t} = \|v\|_2 \lim_{t \to \infty} \frac{\|x-a\|_2^2/t- 2 u^{\top}(x-a) }{\|(x - a)/t - u  \|_2 + 1} \nonumber\\
&& = - \|v\|_2  u^{\top}(x-a) = - v^{\top}(x-a).\label{eqn:Euclidean_view}
\end{eqnarray}

\subsection{Horospherically convex optimization}
Here we briefly introduce horospherically convex optimization~\cite{CriscitielloKim2025,GoodwinLewisGenaroNicolae2026}.
Let $({\cal H},d)$ be a Hadamard space and $K \subseteq {\cal H}$ a convex set.
Let $f:K \to \RR$ be a real-valued function. 
A {\em Busemann subgradient} of $f$ at point $x \in K$ is
a geodesic ray $\gamma:[0,\infty) \to {\cal H}$ with $\gamma(0) = x$ satisfying
\begin{equation}\label{eqn:Busemann_subgradient}
f(y) \geq f(x) + b_\gamma(y)  \quad (y \in K).
\end{equation}
Note that \cite{GoodwinLewisGenaroNicolae2026} defines the Busemann subgradient as a pair $(s,\gamma)$ of $s \in \RR_{\geq 0}$ and unit-speed geodesic 
ray $\gamma$ issuing at $x$ such that $f(y) \geq f(x) + sb_\gamma(y)$ for all $y\in{\cal H}$.
The equivalence is clear by the correspondence $\gamma(t) \Leftrightarrow (s_\gamma, \gamma(t/s_\gamma))$.
Then $f:K \to \RR$ is called {\em horospherically convex}~\cite{CriscitielloKim2025}, 
or {\em Busemann subdifferentiable}~\cite{GoodwinLewisGenaroNicolae2026},
if it admits a Busemann subgradient at every point in $K$.

Suppose that 
${\cal H}$ has the {\em geodesic extension property}, that is,
every geodesic can be extended to a geodesic ray.
As in \cite{GoodwinLewisGenaroNicolae2026},
we consider the minimization of a sum of horospherically convex functions $f_0,f_1,\dots,f_{m}$ defined on a closed convex set $K \subseteq {\cal H}$ 
\begin{equation}
\mbox{Min.} \quad f(x) := \sum_{k=0}^{m} f_k(x)  \quad \mbox{s.t.} \quad x \in K.
\end{equation}
To apply the {\em incremental Busemann subgradient method} in \cite{GoodwinLewisGenaroNicolae2026}, we assume:
\begin{itemize}
\item $\inf_{x\in K} f(x) =: f^* > -\infty$, and there is $x^* \in K$ with $f(x^*) = f^*$.
\item The upper bound $D$ of the diameter of $K$ is known in advance.
\item An initial point $x^0 \in K$ is available.
\item A Busemann subgradient $\gamma$ of each $f_k$ at any $x\in K$ is available, 
where the speed $s_{\gamma}$ is bounded by some constant $L$ known in advance.
\item The projection $\proj_K:{\cal H} \to K$ is available.
\end{itemize}

\begin{description}
\item[{Incremental Busemann subgradient method}]\cite[Algorithm 2]{GoodwinLewisGenaroNicolae2026}
\item[0:] $x^0 \leftarrow$ any point in $K$, $i \leftarrow 0$.
\item[1:] $t_i \leftarrow$ step size.
\item[2:] For $k = 0,1,2,\ldots,m$, do
\begin{description}
\item[2-1:] $\gamma \leftarrow $ a Busemann subgradient of $f_k$ at $x^{i}$, where $s_\gamma \leq L$.
\item[2-2] $x^{i} \leftarrow \proj_K(\gamma(t_i))$.
\end{description}
\item[3:] $x^{i+1} \leftarrow x^{i}$, $i \leftarrow i+1$ and go to 1. 
\end{description}

Then, the following convergence result holds, 
where we state it in a slightly weaker form to allow rational step sizes.
\begin{Thm}[{\cite[Theorem 6.4]{GoodwinLewisGenaroNicolae2026}}]\label{thm:incremental_subgradient}
Suppose that the step size $t_i$ satisfies
\begin{equation}
\frac{D}{2L(m+1)\sqrt{i+1}} \leq t_i \leq \frac{D}{L(m+1)\sqrt{i+1}}.
\end{equation}
Then it holds
\begin{equation}
\min_{i=1,2,\ldots,N} f(x^i) - f^* \leq  
\frac{4(1+\log 3)(m+1)LD}{{\sqrt{N+2}}}.
\end{equation}
\end{Thm}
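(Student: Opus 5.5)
We follow the standard analysis of the projected incremental subgradient method, with Euclidean inner products replaced by Busemann inequalities in the Hadamard space. Fix an optimal point $x^*\in K$, an outer iteration $i$, and write $x^{i,k}$ for the iterate after processing $f_0,\dots,f_{k-1}$. Thus $x^{i,0}=x^i$ and $x^{i,m+1}=x^{i+1}$. The basic one-step inequality will bound $d(x^{i,k+1},x^*)^2$ in terms of $d(x^{i,k},x^*)^2$. Let $\gamma$ be the Busemann subgradient of $f_k$ at $x^{i,k}$, with $s_\gamma\le L$. Since $\proj_K$ is nonexpansive and $x^*\in K$, we have $d(x^{i,k+1},x^*)\le d(\gamma(t_i),x^*)$.

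The next step bounds $d(\gamma(t),x^*)$ in terms of the Busemann function. The function $t\mapsto d(\gamma(t),x^*)^2$ is $2s_\gamma^2$-strongly convex, by the CAT(0) inequality applied along the ray. Its asymptotic slope satisfies $d(\gamma(t),x^*)-s_\gamma t\to b_\gamma(x^*)/s_\gamma$. From these two facts we aim to show
\[
d(\gamma(t),x^*)^2\le d(x^{i,k},x^*)^2+2t\,b_\gamma(x^*)+s_\gamma^2t^2 .
\]
This is the Hadamard analogue of $\|x-tg-x^*\|^2=\|x-x^*\|^2-2t\langle g,x-x^*\rangle+t^2\|g\|^2$. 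It is the key geometric lemma. I would prove it by using that $\phi(t)=d(\gamma(t),x^*)^2-s_\gamma^2t^2$ is convex with $\phi(t)/t\to 2b_\gamma(x^*)$, which gives $\phi(t)\le\phi(0)+2t\,b_\gamma(x^*)$.

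This step is the main obstacle. One has to check carefully that the CAT(0) comparison yields convexity of $\phi$ after subtracting $s_\gamma^2t^2$, and that the slope limit equals $2b_\gamma$ with the paper's speed normalization. The paper allows non-unit speed, so I would first reparametrize to unit speed and $s=s_\gamma$ and then rescale.

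Once the inequality holds, the subgradient property gives $b_\gamma(x^*)\le f_k(x^*)-f_k(x^{i,k})$. Substituting, we get
\[
d(x^{i,k+1},x^*)^2\le d(x^{i,k},x^*)^2-2t_i\bigl(f_k(x^{i,k})-f_k(x^*)\bigr)+L^2t_i^2 .
\]
We then have to compare $f_k(x^{i,k})$ with $f_k(x^i)$. Each $f_k$ is $L$-Lipschitz, since its Busemann subgradients have speed at most $L$ and Busemann functions are $s_\gamma$-Lipschitz. Also $d(x^i,x^{i,k})\le kLt_i$. Summing over $k=0,\dots,m$ therefore yields
\[
d(x^{i+1},x^*)^2\le d(x^i,x^*)^2-2t_i\bigl(f(x^i)-f^*\bigr)+c\,(m+1)^2L^2t_i^2
\]
with a small constant $c$.

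Finally we telescope over $i$ up to $N$, using $d\le D$. This gives
\[
\min_i \bigl(f(x^i)-f^*\bigr)\le \frac{D^2+c(m+1)^2L^2\sum_i t_i^2}{2\sum_i t_i}.
\]
Inserting the two-sided step-size bounds, we have $\sum t_i^2\lesssim \frac{D^2}{L^2(m+1)^2}\log N$, which is where the $\log 3$ term enters after careful summation, and $\sum t_i\gtrsim \frac{D}{L(m+1)}\sqrt{N}$. This produces the stated $O((m+1)LD/\sqrt{N+2})$ bound. I would track constants as in \cite{GoodwinLewisGenaroNicolae2026}. Since the lower bound on the step is half of the prescribed one, the constant doubles, which should be absorbed in the factor $4(1+\log 3)$.
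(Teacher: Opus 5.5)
Your per-iteration analysis is sound, and it is essentially the part of \cite{GoodwinLewisGenaroNicolae2026} that the paper takes for granted. Convexity of $\phi(t)=d(\gamma(t),x^*)^2-s_\gamma^2t^2$ together with $\phi(t)/t\to 2b_\gamma(x^*)$ does give $d(\gamma(t),x^*)^2\le d(\gamma(0),x^*)^2+2t\,b_\gamma(x^*)+s_\gamma^2t^2$. The incremental bookkeeping gives $c=1$ exactly, since $2\sum_{k=0}^m k+(m+1)=(m+1)^2$.

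The gap is in the last step. You telescope over all outer iterations $i\le N$, and then your own estimates give $\sum_i t_i^2\approx\frac{D^2}{L^2(m+1)^2}\log(N+1)$ and $\sum_i t_i\approx\frac{D}{L(m+1)}\sqrt{N}$. The resulting bound is therefore of order $(m+1)LD(1+\log N)/\sqrt{N}$. The harmonic sum $\sum_{i\le N}1/(i+1)$ is unbounded, and no careful summation turns it into the constant $\log 3$. So the stated $\log N$-free rate does not follow from what you wrote.

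The missing idea is exactly the modification the paper makes to the last step of \cite{GoodwinLewisGenaroNicolae2026}: telescope only over the second half of the run, $i=\lceil N/2\rceil,\ldots,N$. Bound the starting term $d(x^{\lceil N/2\rceil},x^*)^2$ by $D^2$, since both points lie in $K$. This yields
\[
\min_{i=1,\ldots,N} f(x^i)-f^*\le\frac{D^2+L^2(m+1)^2\sum_{j=\lceil N/2\rceil}^{N} t_j^2}{2\sum_{j=\lceil N/2\rceil}^{N} t_j}\le(m+1)LD\,\frac{1+\sum_{j=\lceil N/2\rceil}^{N}\frac{1}{j+1}}{\sum_{j=\lceil N/2\rceil}^{N}\frac{1}{\sqrt{j+1}}},
\]
using the upper step bound in the numerator and the lower one in the denominator (its factor $\frac12$ cancels the $2$).

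Two estimates finish the proof:
\begin{itemize}
\item By integral comparison, $\sum_{j=\lceil N/2\rceil}^{N}\frac{1}{j+1}\le\log\frac{N+1}{\lceil N/2\rceil}\le\log 3$ for every $N\ge 1$. This is where $\log 3$ actually comes from.
\item The denominator has $\lfloor N/2\rfloor+1\ge\frac{N+1}{2}$ terms, each at least $(N+1)^{-1/2}$. So it is at least $\frac{\sqrt{N+1}}{2}\ge\frac{\sqrt{N+2}}{4}$.
\end{itemize}
Combining these gives the claimed $4(1+\log 3)(m+1)LD/\sqrt{N+2}$.
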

\begin{proof}
This follows from modifying the last step of their proof as
\begin{eqnarray*}
&& \min_{i=1,2,\ldots,N} f(x^i) - f^* \leq  \frac{D^2+L^2(m+1)^2 \sum_{j=\lceil N/2 \rceil}^N t_j^2}{2 \sum_{j=\lceil N/2 \rceil}^N t_j} \\
&& \leq (m+1)LD \frac{1+ \sum_{j=\lceil N/2 \rceil}^N 1/(j+1)}
{\sum_{j=\lceil N/2 \rceil}^N 1/\sqrt{j+1}} \leq \frac{4(1+\log 3)(m+1)LD}{{\sqrt{N+2}}}.
\end{eqnarray*}

\end{proof}

\subsection{Euclidean buildings}
{\em Euclidean buildings}~\cite{AbramenkoBrown_Building} constitute a representative class of Hadamard spaces without manifold structure. We will utilize special Euclidean buildings obtained 
from complete flags of $\KK^n$; for our purpose,
\cite[p.340--346]{BridosonHaedfliger_Nonpositive} is a useful reference.
The Euclidean building used in this paper, denoted by ${\cal B} = {\cal B}(\KK^n)$, is the set of all pairs $(p, {\cal X})$ of nonincreasing vectors $p \in \RR^{n}_{\downarrow}$ and complete flags ${\cal X}$ of $\KK^n$,
modulo the equivalence relation:
$(p,{\cal X}) \sim (q,{\cal Y})$ $\Leftrightarrow$ $p = q$ 
and $X_i = Y_i$ if $p_i>p_{i+1}$.
An equivalence class $[(p,{\cal X})]$ is also denoted by $(p,{\cal X})$.
The space ${\cal B}$ can be viewed as the product of $\RR$ 
and the Euclidean cone over the spherical building of type A; in particular, it is a Euclidean building.
For a fixed flag ${\cal X}$, 
the set of all $(p,{\cal X})$ for $p \in \RR^n_{\downarrow}$ is called a {\em chamber}.
Then ${\cal B}$ is 
obtained by gluing chambers along~$\sim$.
The distance on each chamber $\Delta$ is induced by the bijection $(p,{\cal X}) \mapsto p$ 
from $\Delta$ to a convex cone $\RR^{n}_{\downarrow}$ in Euclidean space $\RR^n$. 
Extending the distance~$d$ on each chamber geodesically, 
the building ${\cal B}$ becomes a geodesic metric space, and is actually a Hadamard space.

A point $(p,{\cal X})$ is uniquely represented by 
a formal sum 
of vector subspaces as
\begin{equation}\label{eqn:expression}
(p,{\cal X})\ \longleftrightarrow\  \sum_{i=1}^{n} (p_{i} - p_{i+1}) X_i,
\end{equation}
where $p_{n+1} := 0$, and the coefficient of $X_i$ is nonnegative except that of $X_n = \KK^n$.
In this expression, the zero vector space $\{{\bf 0}\}$ corresponds to the zero element.

An {\em apartment} ${\cal A} = {\cal A}(v_1,v_2,\ldots,v_n)$
associated with an ordered basis $(v_1,v_2,\ldots,v_n)$ of $\KK^n$ 
is the union of $n !$ chambers for flags of form 
\[
\{0\} < \langle v_{i_1} \rangle < \langle v_{i_1},v_{i_2} \rangle < \cdots < \langle v_{i_1},v_{i_2},\ldots,v_{i_n} \rangle = \KK^n,
\] 
where $(i_1,i_2,\ldots,i_n)$ is a permutation of $[n]$.
An apartment is an isometric subspace isometric to $\RR^n$. 
The isometry between $\RR^n$ and ${\cal A}$ is given by the following correspondence:
\begin{equation}\label{eqn:correspondence}
p = \sum_{k=1}^{n} (p_{j_k} - p_{j_{k+1}}) {\bf 1}_{ j_1,j_2,\ldots, j_k}\ \longleftrightarrow \ \sum_{k=1}^{n} (p_{j_k} - p_{j_{k+1}}) \langle v_{j_1},v_{j_2},\ldots, v_{j_k}\rangle,
\end{equation}
where $p \in \RR^n$ is ordered as $p_{j_1}\geq p_{j_2} \geq \cdots \geq p_{j_n}$, and $p_{j_{n+1}} :=0$.
By this correspondence~(\ref{eqn:correspondence}), an $n$-dimensional vector $p \in \RR^n$ uniquely 
specifies a point in ${\cal A}$, and is called the 
${\cal A}$-{\em coordinate} of the point.
By Lemma~\ref{lem:apartment}, for any two points in ${\cal B}$ there is an apartment containing them.
This property gives an explicit construction of a geodesic between any two points and also implies the geodesic extension property of ${\cal B}$. 

Next we consider the Busemann function $b_S = b_{\gamma_S}$ of the following geodesic ray 
$\gamma_S$ associated with a (nonzero) vector subspace $S \in {\cal S}(\KK^n)$:
\begin{equation}
\gamma_S(t):=  (t{\bf 1}_{[\dim S]},{\cal X}) = tS \quad (t \in [0,\infty)),
\end{equation}
where ${\cal X}$ is any complete flag with $X_{\dim S} = S$, 
and $tS$ is the right expression in (\ref{eqn:expression}). 
For $S = \KK^n$, we can define the geodesic of 
the negative direction:
\begin{equation}
\gamma_{-\KK^n} (t):= - t \KK^n \quad (= \gamma_{\KK^n}(-t)),
\end{equation}
We give an explicit formula of the Busemann function $b_{S} := b_{\gamma_S}$.
In the proof, we see that $\gamma_S$ is actually 
a geodesic ray with speed $\sqrt{\dim S}$.
\begin{Lem}[{Implicit in \cite{HiraiHadamard,KLM2009JDG}}]\label{lem:b_S}
For $S \in {\cal S}(\KK^n)$, it holds
\begin{equation}\label{eqn:Busemann_formula}
 b_{S}(p,{\cal X}) = - \sum_{i=1}^{n} p_{i} (\dim S \cap X_i - \dim S \cap X_{i-1}) \quad ((p,{\cal X})\in {\cal B}).
\end{equation}
\end{Lem}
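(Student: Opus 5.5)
Fix a point $x=(p,{\cal X})\in{\cal B}$. The plan is to reduce the computation to Euclidean space inside a single apartment and then apply (\ref{eqn:Euclidean_view}).

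\textbf{Step 1: a common apartment.} Choose a complete flag ${\cal Y}$ with $Y_{\dim S}=S$. By Lemma~\ref{lem:apartment} there is a basis $v_1,\ldots,v_n$ such that every subspace in ${\cal X}\cup{\cal Y}$ is spanned by a subset of the $v_j$'s. Reorder the basis so that $X_i=\langle v_1,\ldots,v_i\rangle$ for all $i$, and let $J\subseteq[n]$ be the index set with $S=\langle v_j\mid j\in J\rangle$. Put ${\cal A}={\cal A}(v_1,\ldots,v_n)$.

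\textbf{Step 2: coordinates in ${\cal A}$.} Under the correspondence (\ref{eqn:correspondence}), the point $x$ has ${\cal A}$-coordinate $p$, because the ordering $p_1\ge\cdots\ge p_n$ matches the flag ${\cal X}$. The ray $\gamma_S(t)=tS$ lies in ${\cal A}$ with coordinate $t{\bf 1}_J$; the flag through $S$ built from the basis can be taken as ${\cal Y}$, and the equivalence class does not depend on this choice.

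Since ${\cal A}$ is isometric to $\RR^n$ and is an isometric subspace of ${\cal B}$, distances between points of ${\cal A}$ are Euclidean distances of coordinates. Hence $\gamma_S$ is a geodesic ray of speed $\|{\bf 1}_J\|_2=\sqrt{\dim S}$. Moreover, $d(x,\gamma_S(t))=\|p-t{\bf 1}_J\|_2$ for all $t$, so the limit defining $b_S(x)$ can be computed entirely in $\RR^n$.

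\textbf{Step 3: Euclidean computation.} Apply (\ref{eqn:Euclidean_view}) with $a={\bf 0}$ and $v={\bf 1}_J$. This gives
\[
b_S(x)=-{\bf 1}_J^\top p=-\sum_{j\in J}p_j .
\]
Next observe that $\dim S\cap X_i-\dim S\cap X_{i-1}$ equals $1$ if $i\in J$ and $0$ otherwise. Indeed, $S\cap X_i=\langle v_j\mid j\in J\cap[i]\rangle$ because all these subspaces are coordinate subspaces of the same basis. Substituting this into the right-hand side of (\ref{eqn:Busemann_formula}) yields exactly $-\sum_{j\in J}p_j$.

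For $S=\{{\bf 0}\}$ both sides are $0$: the speed is zero, so $b_S=0$ by definition. The negative direction $\gamma_{-\KK^n}$ is handled identically with $v=-{\bf 1}$, though that case is not part of the statement.

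\textbf{Main obstacle.} The delicate point is justifying that $d(x,\gamma_S(t))$ is the Euclidean distance within ${\cal A}$. This requires that ${\cal A}$ be isometrically embedded in ${\cal B}$ as a convex subset, not just that each chamber carries the Euclidean metric. I would take this from the cited fact that an apartment is an isometric subspace, as stated before (\ref{eqn:correspondence}). The remaining work is checking that the representative flags are compatible, which follows from the equivalence relation $\sim$.
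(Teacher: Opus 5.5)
Your proposal is correct and follows essentially the same route as the paper's proof. Both place $(p,{\cal X})$ and $S$ in a common apartment with $X_i=\langle v_1,\ldots,v_i\rangle$, read $\gamma_S$ as $t{\bf 1}_J$ in ${\cal A}$-coordinates, and apply (\ref{eqn:Euclidean_view}) to obtain $-\sum_{j\in J}p_j$, which is then identified with the dimension-increment sum.
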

\begin{proof}
Choose an apartment ${\cal A} = {\cal A}(v_1,v_2,\ldots,v_n)$ 
containing points $(p,{\cal X})$ and $S$. 
By Lemma~\ref{lem:apartment},
we may assume that $X_i = \langle v_1,\ldots,v_i\rangle$ $(i \in [n])$
and $S = \langle v_{j_1},\ldots,v_{j_k} \rangle$ where $k:= \dim S$.
In the ${\cal A}$-coordinate, the point $(p,{\cal X})$ is written as $p$, and  
the geodesic $\gamma_S$ is written as
\begin{equation}
\gamma_S(t) = t \langle v_{j_1},v_{j_2},\ldots,v_{j_k} \rangle = t {\bf 1}_{j_1,\ldots,j_k} \quad (t \in [0,\infty)).
\end{equation}
Therefore, $\gamma_S$ is a geodesic ray in ${\cal A} \simeq \RR^n$, and is a geodesic ray in ${\cal B}$. 
By~(\ref{eqn:Euclidean_view}), it holds
\begin{eqnarray*}
&& b_S(p,{\cal X}) =  - \sum_{\ell =1}^k p_{j_{\ell}} = -\sum_{i= 1}^{n} p_i (|[i] \cap \{j_1,\ldots, j_k\}| - |[i-1] \cap \{j_1,\ldots, j_k\}| ) \\
&& = - \sum_{i= 1}^{n} p_i (\dim X_i \cap S - \dim X_{i-1} \cap S). \qedhere
\end{eqnarray*}
\end{proof}

We next consider Busemann subgradients of $b_S$. 
\begin{Lem}
For $S \in  {\cal S}(\KK^n)$,
the Busemann subgradient $\gamma$ of $b_S$ at $(p,{\cal X})$ is given by
\begin{equation}
\gamma(t) = p + t {\bf 1}_{j_1,j_2,\ldots,j_k} \quad (t \in [0,\infty))
\end{equation}
where the geodesic is considered in the ${\cal A}$-coordinate of an apartment ${\cal A} = {\cal A}(v_1,v_2,\ldots,v_n)$ containing $p$ such that 
 $X_i = \langle v_1,\ldots,v_i\rangle$ $(i \in [n])$ and $S = \langle v_{j_1},\ldots,v_{j_k} \rangle$.
 Additionally,
the Busemann subgradient $\gamma$ of $b_{-\KK^n}$ at $(p,{\cal X})$ is given by $\gamma(t) = p - t{\bf 1}$.
\end{Lem}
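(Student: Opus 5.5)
We need to show that for every point $(q,{\cal Y}) \in {\cal B}$,
\[
b_S(q,{\cal Y}) \geq b_S(p,{\cal X}) + b_\gamma(q,{\cal Y}),
\]
with $\gamma(0)=p$. In fact we will show equality holds identically, i.e.\ $b_\gamma = b_S - b_S(p,{\cal X})$ on all of ${\cal B}$. First, $\gamma$ is a geodesic ray in ${\cal B}$: it is a straight ray in the apartment ${\cal A}\simeq \RR^n$, and ${\cal A}$ is an isometric (convex) subspace. Its speed is $\sqrt{k}$, and $\gamma(0)=p$ is the ${\cal A}$-coordinate of $(p,{\cal X})$, since $X_i=\langle v_1,\ldots,v_i\rangle$.

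The key observation is that $\gamma$ and $\gamma_S$ are asymptotic rays with the same speed. In ${\cal A}$-coordinates $\gamma_S(t)=t{\bf 1}_{j_1,\ldots,j_k}$, which is exactly the ray used in the proof of Lemma~\ref{lem:b_S}. Hence
\[
d(\gamma(t),\gamma_S(t)) = \|p\|_2
\]
for all $t$. For unit-speed reparametrizations $\tilde\gamma(t)=\gamma(t/\sqrt{k})$ and $\tilde\gamma_S(t)=\gamma_S(t/\sqrt k)$ we have
\[
b_\gamma(y)=\sqrt{k}\lim_t\bigl(d(y,\tilde\gamma(t))-t\bigr), \qquad b_S(y)=\sqrt{k}\lim_t\bigl(d(y,\tilde\gamma_S(t))-t\bigr).
\]
The standard CAT(0) fact is that Busemann functions of asymptotic unit-speed rays differ by a constant. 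Concretely, $t\mapsto d(y,\tilde\gamma(t))-d(y,\tilde\gamma_S(t))$ is bounded by $\|p\|_2$, and the limit of the difference exists because both individual limits exist. To avoid quoting that fact, I would instead compute $b_\gamma$ directly, as follows.

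Fix any point $y=(q,{\cal Y})$. By Lemma~\ref{lem:apartment}, choose an apartment ${\cal A}'$ containing $y$ and the flag determined by $S$ together with a fixed chamber of ${\cal A}$ containing a tail of $\gamma$. Along $\gamma(t)=p+t{\bf 1}_{j_1,\ldots,j_k}$, for $t$ large the coordinates $j_1,\ldots,j_k$ dominate. So the tail of $\gamma$ lies in a chamber whose flag passes through $S$ and refines the flags $\langle v_{j_1},\ldots,v_{j_k}\rangle\subset\cdots$. Within ${\cal A}'$ the tail is again a straight ray with direction ${\bf 1}_{J'}$, where $S=\langle v'_{j}\mid j\in J'\rangle$. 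Formula~(\ref{eqn:Euclidean_view}) then gives
\[
b_\gamma(y) = -\langle {\bf 1}_{J'}, q'-a'\rangle,
\]
where $q'$ and $a'$ are the ${\cal A}'$-coordinates of $y$ and of a tail point of $\gamma$. This is $b_S(y)$ plus a term independent of $y$, by the same computation as in Lemma~\ref{lem:b_S}.

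The constant is fixed by evaluating at $y=(p,{\cal X})$: there $b_\gamma=0$, because $\gamma$ starts at $p$ (in ${\cal A}$, $-{\bf 1}_J^\top(p-p)=0$). Hence $b_\gamma = b_S - b_S(p,{\cal X})$, and the subgradient inequality holds with equality.

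For $b_{-\KK^n}$ the argument is identical and in fact easier. By (\ref{eqn:Euclidean_view}) in any apartment, $b_{-\KK^n}(q,{\cal Y})=\sum_i q_i$, which is independent of flags. The ray $\gamma(t)=p-t{\bf 1}$ stays in the chamber of ${\cal X}$, and its Busemann function is $\sum_i q_i-\sum_i p_i$.

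The main obstacle is the tail-chamber argument, i.e.\ justifying that an apartment ${\cal A}'$ contains both $y$ and a whole tail of $\gamma$ as a straight ray. This requires noting that the tail lies in a single chamber, one flag, so that Lemma~\ref{lem:apartment} applies to two complete flags. It also requires noting that the resulting expression reproduces $b_S$ up to an additive constant.
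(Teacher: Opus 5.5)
Your proof is correct, and at the one substantive step it takes a different route from the paper. The paper also starts from $d(\gamma(t),\gamma_S(t))=\|p\|_2$. It then cites the general CAT(0) fact that Busemann functions of asymptotic rays differ by a constant (Bridson--Haefliger, Cor.~II.8.20). It fixes the constant by $b_\gamma(p,{\cal X})=0$, and settles the $-\KK^n$ case with the remark $b_{-\KK^n}=-b_{\KK^n}$.

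You instead prove the needed constancy inside the building, in four steps:
\begin{itemize}
\item For $t\ge p_1-p_n$, every point $p+t{\bf 1}_J$ sorts the same way, so the tail of $\gamma$ lies in a single chamber. Its flag lists $v_{j_1},\dots,v_{j_k}$ first, so it passes through $S$.
\item Lemma~\ref{lem:apartment} gives an apartment ${\cal A}'$ containing that chamber and $y$.
\item In ${\cal A}'$ the tail is a straight ray with direction ${\bf 1}_{J'}$, and (\ref{eqn:Euclidean_view}) computes $b_\gamma(y)$ explicitly.
\item (\ref{eqn:Busemann_formula}) identifies the result as $b_S(y)$ plus a constant.
\end{itemize}

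The paper's route is a one-line citation that works in any Hadamard space. Yours is self-contained and uses only the apartment machinery already developed in Section~\ref{sec:preliminaries}, the same mechanism as the proof of Lemma~\ref{lem:b_S}. It also handles the $-\KK^n$ case more fully. The paper's one-line remark implicitly needs the Busemann function of $t\mapsto p-t{\bf 1}$ to be exactly the negative of that of $t\mapsto p+t{\bf 1}$. In a general CAT(0) space one only has $b_{+}+b_{-}\ge 0$ for a geodesic line. Your chamber computation, giving $\pm(\sum_i q_i-\sum_i p_i)$, supplies the needed equality directly.

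Two small points to tighten:
\begin{itemize}
\item Since ${\cal A}'$ depends on $y$, you should say explicitly why the additive term is independent of $y$. The reason is that ${\bf 1}_{J'}^{\top}a'=-b_S(a')$ is intrinsic by (\ref{eqn:Busemann_formula}).
\item If $a'=\gamma(t_0)$, the reparametrization shift gives $b_\gamma(y)=-{\bf 1}_{J'}^{\top}(q'-a')-kt_0$, not $-{\bf 1}_{J'}^{\top}(q'-a')$. This is harmless, because you fix the constant at $y=(p,{\cal X})$ anyway.
\end{itemize}
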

\begin{proof}
Since $d(\gamma(t), \gamma_{S}(t)) = \|\gamma(t)-\gamma_S(t)\|^2_2 =\|p\|_2^2 < \infty  (\forall t \in [0,\infty))$, 
the two geodesic rays $\gamma$ and $\gamma_S$ are {\em asymptotic} (\cite[Definition II.8.1]{BridosonHaedfliger_Nonpositive}).
It is known \cite[Corollary II.8.20]{BridosonHaedfliger_Nonpositive} that the Busemann functions for two asymptotic geodesic rays
are equal up to some additive constant $C := b_S(q,{\cal Y}) - b_{\gamma}(q,{\cal Y})$ $(\forall (q,{\cal Y}) \in {\cal B})$.
The constant $C$ equals $b_{S}(p, {\cal X})$ since $b_\gamma(p,{\cal X}) = 0$. Then we have the defining property 
of the Busemann subgradient~(\ref{eqn:Busemann_subgradient}):
\begin{equation*}
b_S(q,{\cal Y}) = b_S(p,{\cal X}) + b_\gamma(q,{\cal Y}) \quad ((q,{\cal Y}) \in {\cal B}).
\end{equation*}
The latter statement follows from $b_{-\KK^n} = - b_{\KK^n}$.
\end{proof}

For $c \geq 0$, let $[{\bf 0}, c \KK^n]$ 
denote the subset of ${\cal B}$ consisting of 
points $(p,{\cal X})$ 
with $p_1 \leq c$ and $p_n \geq 0$.

\begin{Lem}\label{lem:[0,cK^n]}
$[{\bf 0}, c \KK^n]$ is a closed convex set with diameter $c\sqrt{n}$, 
where the projection is given by
\begin{equation}\label{eqn:proj}
{\textstyle \proj_{[{\bf 0}, c \KK^n]}} (p,{\cal X}) 
= (\min (c{\bf 1},\max({\bf 0},p)),{\cal X}) \quad ((p,{\cal X}) \in {\cal B}).
\end{equation}
\end{Lem}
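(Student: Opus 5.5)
Everything will be reduced to Euclidean geometry inside apartments, using Lemma~\ref{lem:apartment}. First, well-definedness of the formula (\ref{eqn:proj}): if $(p,{\cal X})\sim(p,{\cal Y})$, i.e.\ $X_i=Y_i$ whenever $p_i>p_{i+1}$, then clipping $q:=\min(c{\bf 1},\max({\bf 0},p))$ is monotone and preserves ties, so $q_i>q_{i+1}$ implies $p_i>p_{i+1}$. Hence $(q,{\cal X})\sim(q,{\cal Y})$. Also $q\in\RR^n_{\downarrow}$, so the right-hand side is a point of $[{\bf 0},c\KK^n]$.

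Next, fix an apartment ${\cal A}={\cal A}(v_1,\ldots,v_n)$. By (\ref{eqn:correspondence}), a point with ${\cal A}$-coordinate $p$ has its sorted vector equal to the decreasing rearrangement of $p$. So ${\cal A}\cap[{\bf 0},c\KK^n]$ corresponds exactly to the cube $[0,c]^n\subseteq\RR^n$. Moreover, clipping coordinatewise commutes with sorting. Therefore, in any apartment containing $(p,{\cal X})$ (chosen so that ${\cal X}$ is one of its flags), the map (\ref{eqn:proj}) is the coordinatewise clipping onto the cube.

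For convexity: any two points of $[{\bf 0},c\KK^n]$ lie in a common apartment (Lemma~\ref{lem:apartment}). The unique geodesic between them is the straight segment in that apartment, since apartments are isometric copies of $\RR^n$ and geodesics in CAT(0) spaces are unique. This segment stays in the cube, which is convex. Closedness follows because the map $(p,{\cal X})\mapsto p$ is $1$-Lipschitz; it is a retraction-type nonexpansive map, which on each chamber is the identity and extends geodesically. The set is the preimage of a closed set of $\RR^n_{\downarrow}$. For the diameter, two points in a common apartment have distance equal to the Euclidean distance inside $[0,c]^n$, which is at most $c\sqrt n$. The value $c\sqrt n$ is attained by ${\bf 0}$ and $c\KK^n$.

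The main obstacle is the projection: $x^*=(q,{\cal X})$ must minimize the distance over all of $[{\bf 0},c\KK^n]$, not just over one apartment. I would use the variational characterization in Hadamard spaces: $x^*$ is the projection iff for all $y\in K$ the Alexandrov angle at $x^*$ between $x$ and $y$ is at least $\pi/2$. Equivalently, one can show $d(x,y)\ge d(x,x^*)$ directly. Given $y\in[{\bf 0},c\KK^n]$, take an apartment ${\cal A}'$ containing $y$ and the chamber of ${\cal X}$, which exists by Lemma~\ref{lem:apartment} applied to ${\cal X}$ and a flag of $y$. In ${\cal A}'$, the points $x$ and $x^*$ have coordinates $p$ and $q$, the latter the Euclidean cube projection of $p$, while $y$ has a coordinate $r\in[0,c]^n$. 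Euclidean projection onto the cube gives $\|p-r\|\ge\|p-q\|$, hence $d(x,y)=\|p-r\|_2\ge\|p-q\|_2=d(x,x^*)$. The key point to check is that the distance between $x$ and $y$ is computed in ${\cal A}'$, which holds because apartments are isometrically embedded. Uniqueness of the projection then identifies $x^*$ as $\proj_{[{\bf 0},c\KK^n]}(p,{\cal X})$.
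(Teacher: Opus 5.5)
Your proof is correct and follows essentially the paper's route. You identify ${\cal A}\cap[{\bf 0},c\KK^n]$ with the cube $[0,c]^n$, get convexity and the diameter $c\sqrt n$ from a common apartment, and verify the projection by comparing Euclidean distances in an apartment containing the chamber of ${\cal X}$ and the competitor $y$; the paper phrases this last step as ``apartments containing $\Delta$ cover ${\cal B}$.'' The only deviation is closedness: you obtain it from nonexpansiveness of $(p,{\cal X})\mapsto p$, whereas the paper reads it off the projection formula. Your justification for that map (``retraction-type, extends geodesically'') is vague; the clean reason is that distances are computed in a common apartment and sorting is $1$-Lipschitz on $\RR^n$ by the rearrangement inequality.
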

\begin{proof}
For two points $x=(p,{\cal X}), y=(q,{\cal Y})$, there is an apartment ${\cal A} = {\cal A}(v_1,v_2,\ldots,v_n)$ containing $x,y$ (Lemma~\ref{lem:apartment}).
In the ${\cal A}$-coordinate, the intersection ${\cal A} \cap[{\bf 0}, c \KK^n]$ is given by:
\begin{equation}\label{eqn:interval}
{\cal A} \cap [{\bf 0}, c \KK^n] = \{p \in \RR^n \mid 0 \leq p_i \leq c\ (i \in [n]) \}.
\end{equation}
Therefore, if $x,y \in [{\bf 0}, c \KK^n]$, 
then the segment $t \mapsto t p + (1-t)q$ (in the ${\cal A}$-coordinate)  
must be the unique geodesic between $x,y$, which belongs to $[{\bf 0}, c \KK^n]$. Thus $[{\bf 0}, c \KK^n]$ is a convex set with diameter $c\sqrt{n}$.

Next we show the projection formula (\ref{eqn:proj}), 
from which the closedness of $[{\bf 0}, c \KK^n]$ follows.
Suppose that $x = (p,{\cal X}) \not \in [{\bf 0}, c \KK^n]$. Then 
$x$ and $y:=(\min (c{\bf 1},\max({\bf 0},p)),{\cal X})$ belong to the same chamber $\Delta$. 
From the expression (\ref{eqn:interval}), we see that
this $y$ is the nearest point of $x$ in ${\cal A} \cap [{\bf 0}, c \KK^n]$
for every apartment containing $\Delta$. 
Since such apartments cover the whole space ${\cal B}$, it must be 
$\proj_{[{\bf 0}, c \KK^n]}(x) = y$.
\end{proof}

\section{Formulation and algorithm}\label{sec:formulation&algorithm}
\subsection{Formulation}

Let $S_1,S_2,\ldots,S_m$ 
be vector subspaces of $\KK^n$. 
For applications in Section~\ref{sec:application}, 
we consider the following generalization of 
the fractional subspace packing problem~(\ref{eqn:fractional_subspace_packing}).
Observe that the inequality system in (\ref{eqn:fractional_subspace_packing}) includes 
an upper bound condition $x_k \leq 1$, which is implied by $\sum_{k=1}^m x_k \dim  (S_k \cap X) \leq \dim X$ for $X = S_k$.
So we allow a possibly stricter rational upper bound $u \in ([0,1] \cap \QQ)^m$. 
Further we consider a nonnegative integer weight $c\in \ZZ_{\geq 0}^m$, where $c_{\rm max} := \max_{k \in [m]} c_k$. 
Our target LP is the following: 
\begin{eqnarray}
\mbox{Max.} && \sum_{k=1}^m c_k x_k \nonumber \\
\mbox{s.t.} && \sum_{k=1}^m  x_k \dim (S_k \cap X)\leq \dim X \quad (X \in  {\cal S}(\KK^n)), \nonumber \\
&& 0 \leq x_k \leq u_k \quad (k \in [m]). \label{eqn:weighted}
\end{eqnarray}
The dual LP is given by
\begin{eqnarray}
\mbox{Min.} && \sum_{X \in {\cal S}(\KK^n)}  \lambda(X) \dim X  + \sum_{k=1}^m u_k y_k  \nonumber \\
\mbox{s.t.} && \sum_{X \in {\cal S}(\KK^n)} \dim (S_k \cap X) \lambda(X) + y_k \geq c_k  \quad (k \in [m]), \nonumber \\
&& \lambda:{\cal S}(\KK^n) \to  \RR_{\geq 0}, |\supp \lambda| < \infty,\ y\in \RR_{\geq 0}^m.   \label{eqn:dual_weighted}
\end{eqnarray}
Then strong duality holds for the same reason as before.
We denote the optimal value of these LP by $\tau^*_{c,u}$.

We show that the dual LP~(\ref{eqn:dual_weighted}) 
can be formulated as a horospherically convex optimization on 
$[{\bf 0}, c_{\rm max}\KK^n]$.
The starting point is the following property, which is classically known for a general setting of lattice polyhedra 
by Gr\"oflin and Hoffman~\cite{GroflinHoffman1982}.
\begin{Lem}[{\cite{GroflinHoffman1982}; see \cite[p.70--71]{CLV01b} and \cite[p.979]{HIOS2025}}]
    There is an optimal solution $(\lambda,y)$ in (\ref{eqn:dual_weighted}) 
    such that $\supp \lambda$ is a flag.
\end{Lem}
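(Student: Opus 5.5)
We use the standard uncrossing argument, driven by Lemma~\ref{lem:supermodular}. First, the dual LP~(\ref{eqn:dual_weighted}) has an optimal solution. The reason is that it is implicitly a finite LP: only the finitely many coefficient vectors in~(\ref{eqn:coefficient_vector}) matter, and we may merge subspaces sharing a coefficient vector by picking one representative. It is bounded below by $0$ and feasible, e.g.\ with $y = c$ and $\lambda = 0$.

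Among all optimal solutions $(\lambda,y)$ with finite support, choose one that maximizes the potential
\[
\Phi(\lambda) := \sum_{X} \lambda(X) (\dim X)^2 .
\]
We must check that this maximum is attained. After the reduction to a finite LP, we can restrict to supports in which each coefficient vector appears once; then the optimal face is a polytope and $\Phi$ is linear on it. It is bounded because $\sum_X \lambda(X)\dim X \le \tau^*_{c,u}$ forces $\lambda(X) \le \tau^*_{c,u}$ for $X \neq \{{\bf 0}\}$. The subspace $\{{\bf 0}\}$ contributes nothing to anything, so we may discard it.

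Suppose $\supp\lambda$ is not a flag. Then there are $X, Y \in \supp \lambda$ with $X \not\subseteq Y$ and $Y \not\subseteq X$. Let $\epsilon := \min(\lambda(X),\lambda(Y)) > 0$, and define $\lambda'$ by decreasing $\lambda(X)$ and $\lambda(Y)$ by $\epsilon$ and increasing $\lambda(X\cap Y)$ and $\lambda(X+Y)$ by $\epsilon$. Keep $y$ unchanged.

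We check the effect of this change.
\begin{itemize}
\item The objective is unchanged, since $\dim X + \dim Y = \dim (X\cap Y) + \dim (X+Y)$.
\item Each covering constraint for $k \in [m]$ can only increase its left-hand side, by Lemma~\ref{lem:supermodular}~(1).
\item Hence $(\lambda', y)$ is feasible and optimal.
\item $\Phi$ strictly increases, by Lemma~\ref{lem:supermodular}~(2): $\Phi(\lambda') - \Phi(\lambda) = \epsilon\,[(\dim X\cap Y)^2 + (\dim X+Y)^2 - (\dim X)^2 - (\dim Y)^2] > 0$.
\end{itemize}
This contradicts the maximality of $\Phi$, so $\supp \lambda$ is a flag.

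The main obstacle is justifying that the maximum of $\Phi$ is attained. The subspace set ${\cal S}(\KK^n)$ is infinite, and uncrossing can create new subspaces. A cleaner alternative is a termination argument that avoids compactness. Fix the finite lattice ${\cal L}$ generated by the initial support under $\cap$ and $+$; this lattice is finite because it lies inside the lattice generated by finitely many subspaces. All uncrossing steps stay within ${\cal L}$. Within ${\cal L}$, the optimal solutions form a polytope in $\RR^{{\cal L}} \times \RR^m$, so the maximizer of the linear function $\Phi$ exists. The argument above applied to that maximizer then finishes the proof.

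One subtlety remains: the subspace lattice generated by finitely many subspaces need not be finite. To get around this, rely on the coefficient-vector finiteness~(\ref{eqn:coefficient_vector}). Identify subspaces by their vectors $(\dim X, \dim S_k\cap X)$; each vector takes values in a finite set, so only finitely many ``types'' occur. The potential $\Phi$ depends only on these vectors, which gives the compactness needed.
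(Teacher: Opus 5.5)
Your proof is correct and is essentially the paper's argument: the same uncrossing step via Lemma~\ref{lem:supermodular}, the same potential $\sum_X \lambda(X)(\dim X)^2$, and the same device (the paper's footnote) of identifying subspaces with equal coefficient vectors~(\ref{eqn:coefficient_vector}) so that the LP becomes finite and the maximizer exists. The detour through the lattice generated by the support is unnecessary (as you observe, that lattice can be infinite), and strictly speaking the optimal face is a polyhedron rather than a polytope (e.g.\ $y_k$ is unbounded on it when $u_k=0$), but $\Phi$ is linear and bounded above there, so its maximum is still attained.
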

The proof is the standard uncrossing argument.
\begin{proof} 
Choose an optimal solution $\lambda$ 
such that $\sum_{X} \lambda(X) (\dim X)^2$ is maximum\footnote{
Formally, consider the equivalence relation on vector subspaces: 
$X \equiv Y$ if $X$ and $Y$ have the same coefficient vectors (\ref{eqn:coefficient_vector}).
Regard $\lambda$ as a function on the equivalence class, and 
regard LP (\ref{eqn:dual_weighted}) as a finite-dimensional LP.}.
Suppose that there are $X,Y \in \supp \lambda$ with 
$X \not \subseteq Y$, $Y \not \subseteq X$.
Let $\epsilon = \min \{\lambda(X),\lambda(Y)\} > 0$.
Define $\lambda'$ by $\lambda'(X) := \lambda(X)-\epsilon$,  $\lambda'(Y) := \lambda(Y)-\epsilon$, 
$\lambda'(X\cap Y) := \lambda(X \cap Y) + \epsilon$, $\lambda'(X+ Y) := \lambda(X + Y) + \epsilon$, and $\lambda'(Z) := \lambda(Z)$ for other $Z$. 
By Lemma~\ref{lem:supermodular}~(1), the resulting $(\lambda',y)$ is feasible. Also the objective value does not change. Hence, $\lambda'$ is also optimal.
However, Lemma~\ref{lem:supermodular}~(2) implies $\sum_{X} \lambda(X) (\dim X)^2< \sum_{X} \lambda'(X) (\dim X)^2$, which contradicts the choice of $\lambda$.
\end{proof}
Then, (\ref{eqn:dual_weighted}) becomes
\begin{eqnarray}
\mbox{Min.} && \sum_{i=1}^n \lambda_i \dim X_i 
+ \sum_{k=1}^m u_k y_k \nonumber \\
\mbox{s.t.} 
&& \sum_{i=1}^n \lambda_i \dim (S_k \cap X_i)  + y_k \geq c_k \quad (k \in [m]), \nonumber \\
&& {\cal X}: \mbox{complete flag in $\KK^n$},\nonumber \\
&& \lambda \in \RR^{n}_{\geq 0},\  y \in \RR_{\geq 0}^m. \label{eqn:building_LP_lambda} 
\end{eqnarray}
Since $\dim X_i = i$, it holds $\sum_{i=1}^n \lambda_i \dim X_i = \lambda_1 + 2 \lambda_2 + 3 \lambda_3 + \cdots + n \lambda_n$.
Define nonnegative nonincreasing vector $p \in \RR_{\downarrow}^{n}$ by
\begin{equation}
p_i := \lambda_i + \lambda_{i+1} + \cdots + \lambda_{n} \quad (i \in [n]).
\end{equation}
In (\ref{eqn:building_LP_lambda}), 
change the variable $\lambda$ to $p$, where 
$\lambda$ is recovered by 
\begin{equation}\label{eqn:recover}
\lambda_i = p_{i} - p_{i+1} \quad (i \in [n]), \quad p_{n+1} := 0.
\end{equation}
Observe the relation
\begin{equation}
\sum_{i=1}^n (p_i - p_{i+1}) \dim (S_k \cap X_i) =  \sum_{i=1}^n p_i \{\dim (S_k \cap X_i) -\dim (S_k \cap X_{i-1})\},
\end{equation}
where $X_0 = \{{\bf 0}\}$.
Then (\ref{eqn:building_LP_lambda}) becomes
\begin{eqnarray}
\mbox{Min.} && \sum_{i=1}^n p_i + \sum_{k=1}^m u_k y_k \nonumber \\
\mbox{s.t.} 
&& \sum_{i=1}^n p_i \{ \dim (S_k \cap X_i) - \dim (S_k \cap X_{i-1})\} +y_k  \geq c_k \quad (k \in [m]),\nonumber \\
&& {\cal X}: \mbox{complete flag in $\KK^n$}, \nonumber \\
&& p_1 \geq p_2 \geq \cdots \geq p_n \geq 0,\   y\in \RR_{\geq 0}^m.  \label{eqn:building_LP}
\end{eqnarray}
From $\dim (S_k \cap X_i) -\dim (S_k \cap X_{i-1}) \in \{0,1\}$, we see
\begin{Lem}
If $((p,{\cal X}),y)$ is feasible to (\ref{eqn:building_LP}), 
then $((\min (c_{\rm max}{\bf 1}, p), {\cal X}),y)$ is also feasible and does not increase the objective value. 
\end{Lem}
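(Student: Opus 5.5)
Let $q := \min(c_{\rm max}{\bf 1}, p)$ and write $\delta^k_i := \dim (S_k \cap X_i) - \dim (S_k \cap X_{i-1}) \in \{0,1\}$. The plan is to check membership of $q$ in the constraint set of (\ref{eqn:building_LP}) and then compare objective values; both steps reduce to coordinatewise statements.

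\emph{Admissibility of $q$.} Since $p_1 \geq \cdots \geq p_n \geq 0$ and $t \mapsto \min(c_{\rm max}, t)$ is nondecreasing, we get $q_1 \geq \cdots \geq q_n \geq 0$, so $(q,{\cal X})$ is a legitimate point. The flag ${\cal X}$ is not changed, so the quantities $\delta^k_i$ are the same for $p$ and $q$. (If $q_i = q_{i+1}$ but $p_i > p_{i+1}$, the class $(q,{\cal X})$ is coarser than $(p,{\cal X})$. This is harmless, because we keep the fixed representative ${\cal X}$ and the constraint expression is well defined on classes.)

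\emph{The covering constraints.} Fix $k \in [m]$ and split into two cases.
\begin{itemize}
\item[(i)] If some index $i$ with $\delta^k_i = 1$ has $p_i \geq c_{\rm max}$, then $q_i = c_{\rm max}$. Every term $q_j \delta^k_j$ is nonnegative, so $\sum_j q_j \delta^k_j + y_k \geq q_i \geq c_{\rm max} \geq c_k$.
\item[(ii)] Otherwise every $i$ with $\delta^k_i = 1$ has $p_i < c_{\rm max}$, hence $q_i = p_i$. Then $\sum_j q_j \delta^k_j = \sum_j p_j \delta^k_j$, and the $k$-th constraint is inherited from the feasibility of $p$.
\end{itemize}
In both cases the constraint holds, so $((q,{\cal X}),y)$ is feasible.

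\emph{The objective.} The objective $\sum_i p_i + \sum_k u_k y_k$ does not increase, because $q_i \leq p_i$ for every $i$ and $y$ is unchanged.

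The only step that needs care is case (i): it relies on the fact that each $\delta^k_i \in \{0,1\}$, so a single truncated coordinate already contributes $c_{\rm max}$, and on the nonnegativity of all other terms. Both facts are immediate, so there is no real obstacle.
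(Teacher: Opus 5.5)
Your proof is correct and is essentially the paper's argument: the paper justifies the lemma only by the remark that $\dim (S_k \cap X_i) - \dim (S_k \cap X_{i-1}) \in \{0,1\}$. Your case split unpacks exactly that remark. Either some coordinate with coefficient one was truncated, and it alone contributes $c_{\rm max} \geq c_k$, or none was, and the $k$-th constraint is unchanged.
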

Therefore, we may add the constraint $p_1 \leq c_{\rm max}$ to (\ref{eqn:building_LP}).
By using the formula (\ref{eqn:Busemann_formula}) of Busemann functions, 
LP~(\ref{eqn:building_LP}) is written as 
\begin{eqnarray}
\mbox{Min.} && b_{-\KK^n}(p,{\cal X}) + \sum_{k=1}^m u_k y_k  \nonumber \\
\mbox{s.t.} 
&& - b_{S_k}(p,{\cal X}) + y_k \geq c_k, \quad (k \in [m]),\nonumber \\
&& (p,{\cal X}) \in [{\bf 0}, c_{\rm max} \KK^n],\ y \in \RR^{m}_{\geq 0}. \label{eqn:Busemann_LP0}
\end{eqnarray}
We eliminate $y$ to obtain: 
\begin{Thm}
$\tau^*_{c,u}$ is equal to the optimal value of
\begin{eqnarray}
\mbox{Min.} && b_{-\KK^n}(p,{\cal X}) + \sum_{k=1}^m u_k \max \{0,  c_k +  b_{S_k}(p,{\cal X})\}  \nonumber \\
\mbox{s.t.} 
&& (p,{\cal X}) \in [{\bf 0}, c_{\rm max} \KK^n]. \label{eqn:Busemann_LP}
\end{eqnarray}
\end{Thm}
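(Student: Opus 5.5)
The plan is to assemble the chain of equivalent reformulations already set up in this section, and then eliminate $y$ by a pointwise minimization. First, the uncrossing lemma lets us restrict the dual LP~(\ref{eqn:dual_weighted}) to solutions whose support is a flag without changing the optimal value $\tau^*_{c,u}$. Any flag can be refined to a complete flag ${\cal X}$ by assigning zero weight to the added subspaces. Conversely, every feasible solution of (\ref{eqn:building_LP_lambda}) yields a feasible $\lambda$ for (\ref{eqn:dual_weighted}) with the same objective. Hence (\ref{eqn:building_LP_lambda}) has optimal value $\tau^*_{c,u}$, and its minimum is attained.

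Next, the change of variables $p_i=\lambda_i+\cdots+\lambda_n$, with inverse (\ref{eqn:recover}), is a bijection between $\RR^n_{\geq 0}$ and $\{p\in\RR^n_\downarrow \mid p_n\geq 0\}$. Summation by parts, as in the displayed identity, carries the objective and the constraints of (\ref{eqn:building_LP_lambda}) exactly to those of (\ref{eqn:building_LP}). By the preceding lemma, replacing $p$ by $\min(c_{\max}\mathbf{1},p)$ preserves feasibility and does not increase the objective, so adding $p_1\le c_{\max}$ leaves the optimal value unchanged. The feasible points $(p,{\cal X})$ are then exactly the points of $[{\bf 0},c_{\max}\KK^n]$. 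One point needs checking here: feasibility depends only on the equivalence class of $(p,{\cal X})$ in ${\cal B}$. This holds because if $p_i=p_{i+1}$, the subspace $X_i$ enters only with coefficient $p_i-p_{i+1}=0$ in the $\lambda$-form of the sums. Lemma~\ref{lem:b_S} then rewrites $\sum_i p_i\{\dim(S_k\cap X_i)-\dim(S_k\cap X_{i-1})\}$ as $-b_{S_k}(p,{\cal X})$ and $\sum_i p_i$ as $-b_{\KK^n}(p,{\cal X})=b_{-\KK^n}(p,{\cal X})$. Together these give (\ref{eqn:Busemann_LP0}).

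Finally, fix $(p,{\cal X})\in[{\bf 0},c_{\max}\KK^n]$ and minimize over $y$. The constraints decouple coordinatewise into $y_k\ge 0$ and $y_k\ge c_k+b_{S_k}(p,{\cal X})$. Since $u_k\ge 0$, the term $u_ky_k$ is minimized at $y_k=\max\{0,c_k+b_{S_k}(p,{\cal X})\}$; when $u_k=0$ this choice is still optimal. Substituting this $y$ gives the objective of (\ref{eqn:Busemann_LP}), so the infimum of (\ref{eqn:Busemann_LP0}) equals the infimum of (\ref{eqn:Busemann_LP}). The two problems also attain their minima simultaneously, so the optimal value of (\ref{eqn:Busemann_LP}) is $\tau^*_{c,u}$.

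The main obstacle is the bookkeeping in the second step: confirming that the reformulated problem is genuinely a function on ${\cal B}$ and not on pairs $(p,{\cal X})$, and that the identification with the Busemann functions matches signs. Both reduce to the explicit formula of Lemma~\ref{lem:b_S}, which is itself computed in an apartment and is therefore independent of the chosen representative.
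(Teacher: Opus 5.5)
Your proposal is correct and follows the same route as the paper: uncrossing to flag support, the substitution $p_i=\lambda_i+\cdots+\lambda_n$, truncation at $c_{\rm max}$, rewriting via Lemma~\ref{lem:b_S} to get (\ref{eqn:Busemann_LP0}), and then eliminating $y$ with $y_k=\max\{0,c_k+b_{S_k}(p,{\cal X})\}$. Your extra checks, namely independence of the representative of the equivalence class and the case $u_k=0$, are valid details that the paper leaves implicit.
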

In the next subsection, 
we apply the incremental subgradient method 
for this horospherically convex optimization problem.
We end this subsection by giving a procedure to 
obtain a solution $(p,{\cal X})$ with zero penalty
for the uncapacitated case~$u = {\bf 1}$.
\begin{Lem}\label{lem:recover}
Suppose that $u={\bf 1}$.
From a feasible solution $(p,{\cal X})$ in (\ref{eqn:Busemann_LP}),  
we can compute in $O(mn^3)$ time 
 a feasible solution $((q,{\cal Y}),{\bf 0})$ in (\ref{eqn:Busemann_LP0})
with
\begin{equation}\label{eqn:modify}
\sum_{i=1}^n q_i  \leq \sum_{i=1}^n p_i + \sum_{k=1}^m \max \{0,  c_k +  b_{S_k}(p,{\cal X})\}.  
\end{equation}
\end{Lem}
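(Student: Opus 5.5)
The plan is to absorb each deficiency $\delta_k := \max\{0, c_k + b_{S_k}(p,{\cal X})\}$ into the flag by increasing coordinates of $p$, one subspace $S_k$ at a time. By Lemma~\ref{lem:b_S}, the constraint for $k$ in (\ref{eqn:Busemann_LP0}) with $y_k=0$ reads $-b_{S_k}(q,{\cal Y}) \ge c_k$. Here $-b_{S_k}(q,{\cal Y})$ is the sum of the $q_i$ over the $\dim S_k$ indices $i$ where $\dim S_k\cap Y_i$ jumps. Adding $\delta$ to every coordinate in the ${\cal A}$-coordinate direction ${\bf 1}_{j_1,\ldots,j_{\dim S_k}}$ raises $-b_{S_k}$ by $\delta\dim S_k \ge \delta$ and raises $\sum_i q_i$ by exactly $\delta\dim S_k$. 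That overshoots the allowed budget. So instead I will add $\delta_k$ to a single coordinate, the one corresponding to some $v_{j}$ spanning part of $S_k$. This increases $\sum_i q_i$ by $\delta_k$. Since Busemann functions are affine along geodesics inside an apartment, and $b_{S'}$ for every $S'$ is the negated coordinate sum over its index set, the move also increases $-b_{S_k}$ by at least $\delta_k$ and never decreases $-b_{S'}$ for any other $S'$. The reason is that in a suitable apartment all $S'$ are coordinate-like only up to $b_{S'}$ being $-\sum$ over jump indices, and increasing a coordinate can only increase such sums if the jump pattern is computed in one fixed apartment. This last point needs care and is addressed below.

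Concretely, process $k=1,\dots,m$ sequentially on the current point $(q,{\cal Y})$, initialized as $(p,{\cal X})$. First compute $\delta_k$ at the current point. If $\delta_k>0$, pick a basis $v_1,\dots,v_n$ adapted to ${\cal Y}$ and use Lemma~\ref{lem:n^3} to get a new adapted basis $u_1,\dots,u_n$ together with indices $j_1,\dots,j_d$ spanning $S_k$. This costs $O(n^3)$, which gives the $O(mn^3)$ total. Then work in the ${\cal A}(u_1,\ldots,u_n)$-coordinate, set $q' := q + \delta_k {\bf 1}_{j_1}$ with $j_1$ the smallest such index, and re-sort to obtain a point $(q',{\cal Y}')$ of ${\cal B}$. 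The new flag ${\cal Y}'$ is $\langle u_{j_1}\rangle$ followed by the old ordering. Then $-b_{S_k}(q',{\cal Y}') = \sum_\ell q'_{j_\ell} = -b_{S_k}(q,{\cal Y})+\delta_k \ge c_k$, and $\sum_i q'_i = \sum_i q_i + \delta_k$. The upper bound $q'\le c_{\max}$ may fail. If so, truncate as in the lemma preceding (\ref{eqn:Busemann_LP0}). That truncation preserves feasibility of every constraint, because each $-b_S$ is a sum of $\dim S$ coordinates with $c_k\le c_{\max}$, and it only decreases the sum.

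The main obstacle is monotonicity: showing that the move does not decrease $-b_{S'}$ for the previously fixed subspaces $S'$, $k'<k$. I would prove this by the characterization of Busemann functions as affine along geodesics combined with the subgradient lemma. The point $(q',{\cal Y}')=\gamma(\delta_k)$ lies on the geodesic ray $\gamma(t)=q+t{\bf 1}_{j_1}$ in ${\cal A}$, which is the Busemann subgradient direction of $b_{\langle u_{j_1}\rangle}$. For any $S'$, choose an apartment containing $(q,{\cal Y})$ and $S'$, so that $-b_{S'}$ is a sum of $\dim S'$ coordinates there. Then use the fact that $-b_{S'}$ is concave along geodesics, its minimum over flags adapted to $S'$ being the formula, and bound $b_{S'}(\gamma(t))-b_{S'}(\gamma(0))\le 0$ via $1$-Lipschitzness in the direction $-{\bf 1}$. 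Equivalently, $b_{S'}(x + t{\bf 1}_J)\le b_{S'}(x)$ because $b_{S'}(\cdot)$ is nonincreasing under the partial order in which $\langle u_{j_1}\rangle$ is added. If this direct monotonicity argument fails, I would fall back on adding $\delta_k$ to the largest coordinate ($i=1$ direction in ${\cal Y}$, i.e.\ the ray toward $Y_1$ once $Y_1\subseteq S_k$ is arranged by reordering). There monotonicity is the obvious dimension-jump argument.
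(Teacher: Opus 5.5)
Your algorithm is the paper's. You work in an apartment ${\cal A}(u_1,\ldots,u_n)$ where ${\cal Y}$ is the standard flag and $S_k=\langle u_{j_1},\ldots,u_{j_d}\rangle$, and raise the single coordinate $q_{j_1}$ by the current deficiency $\delta_k$. Then $-b_{S_k}$ and $\sum_i q_i$ both grow by exactly $\delta_k$, at cost $O(n^3)$ per $k$. The gap is the step you flag as the main obstacle: that this move never increases $c_{k'}+b_{S_{k'}}$ for $k'\neq k$.

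You need this for $k'<k$, to keep earlier constraints satisfied. You also need it for $k'>k$, because (\ref{eqn:modify}) requires the deficiency you later pay for $S_{k'}$ to be at most its value at the initial point $(p,{\cal X})$. Your justification does not establish it:
\begin{itemize}
\item Busemann functions are not affine on an apartment. In ${\cal B}(\KK^2)$ with $S'=\langle u_1+u_2\rangle$, formula (\ref{eqn:Busemann_formula}) gives $-b_{S'}=\min(q_1,q_2)$ on ${\cal A}(u_1,u_2)$.
\item Passing to an apartment containing $(q,{\cal Y})$ and $S'$ does not help. The ray $t\mapsto q+t{\bf 1}_{j_1}$ lies in ${\cal A}(u_1,\ldots,u_n)$, not in that apartment.
\item Concavity or Lipschitz bounds alone give no sign.
\item ``$b_{S'}$ is nonincreasing under the partial order in which $\langle u_{j_1}\rangle$ is added'' is just the claim restated.
\item The fallback fails. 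You cannot reorder ${\cal Y}$ to get $Y_1\subseteq S_k$ without moving the point, unless coordinates tie. And if $Y_1\not\subseteq S_k$, raising $q_1$ changes $-b_{S_k}=\sum_i(q_i-q_{i+1})\dim S_k\cap Y_i$ by $\delta\dim S_k\cap Y_1=0$.
\end{itemize}

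The missing observation is that (\ref{eqn:Busemann_formula}) holds at every point and for every flag, whether or not $S'$ is a coordinate subspace of the apartment. Take the chamber of ${\cal A}(u_1,\ldots,u_n)$ indexed by a permutation $\sigma$ with $q_{\sigma(1)}\geq\cdots\geq q_{\sigma(n)}$. There the formula reads $-b_{S'}(q)=\sum_i q_{\sigma(i)}d^{\sigma}_i$, where $d^{\sigma}_i=\dim S'\cap\langle u_{\sigma(1)},\ldots,u_{\sigma(i)}\rangle-\dim S'\cap\langle u_{\sigma(1)},\ldots,u_{\sigma(i-1)}\rangle\in\{0,1\}$. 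So $-b_{S'}$ is continuous and piecewise linear on the apartment, with nonnegative coefficients on every piece. Hence it is nondecreasing along the segment from $q$ to $q+\delta_k{\bf 1}_{j_1}$.

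This is what the paper's remark encodes: if $p_{i-1}>p_i$, increasing $p_i$ does not increase $c_k+b_{S_k}$. One moves inside a chamber and re-represents the point with the swapped flag at ties. Equivalently, $-b_{S'}$ on an apartment is the Lov\'asz extension of the monotone set function $I\mapsto\dim S'\cap\langle u_i\mid i\in I\rangle$.

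A route closer to what you were reaching for also works. The function $t\mapsto -b_{S'}(q+t{\bf 1}_{j_1})$ is concave, and by (\ref{eqn:Busemann_formula}) it is nonnegative whenever $q\geq{\bf 0}$. A concave function on $[0,\infty)$ that is bounded below is nondecreasing.

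Two minor corrections:
\begin{itemize}
\item Truncation is never needed, since $q_{j_1}+\delta_k=c_k-\sum_{\ell\geq 2}q_{j_\ell}\leq c_{\rm max}$.
\item The new flag inserts $u_{j_1}$ at its sorted position, not necessarily first.
\end{itemize}
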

\begin{proof}
By $\dim S_k \cap X_i - \dim S_k \cap X_{i-1} \in \{0,1\}$, 
if $p_{i-1} > p_i$, 
then increasing $p_i$ by small $\epsilon > 0$ does not increase $c_k + b_{S_k}(q, {\cal Y})$.
In addition, if $\dim S_k \cap X_i - \dim S_k \cap X_{i-1} = 1$, 
then $c_k + b_{S_k}(q, {\cal Y})$ decreases by $\epsilon$

Consider $k$ such that $\alpha:= c_k + b_{S_k}(p, {\cal X}) > 0$.
By Lemma~\ref{lem:n^3},
choose a basis $u_1,u_2,\ldots,u_n$ and indices $j_1,j_2,\ldots,j_{\dim S_k}$ 
such that  $X_i = \langle u_1,u_2,\ldots,u_i\rangle$ $(i \in [n])$ and 
$S_k = \langle u_{j_1},u_{j_2},\ldots, u_{j_{\dim S_k}}\rangle$.
In the apartment ${\cal A} = {\cal A}(u_1,u_2,\ldots,u_n)$, 
the Busemann function for $S_k$ 
is written as $p \mapsto - \sum_{i=1}^{\dim S_k} p_{j_i}$.
Therefore, increase $p_{j_1}$ by $\alpha$, which does not increase the RHS of (\ref{eqn:modify}). 
Recover the expression $(p,{\cal X})$ according to (\ref{eqn:correspondence}).
Repeat this procedure until such $k$ does not exist. 
Then the second term of the RHS of (\ref{eqn:modify}) is zero.
Therefore, $((p,{\cal X}),{\bf 0})$ is feasible to (\ref{eqn:building_LP}). 
\end{proof}

\subsection{Algorithm}
Define $f_k:{\cal B} \to \RR$ $(k=0,1,\ldots,m)$ by 
\begin{eqnarray}
f_0(p,{\cal X}) &:= & b_{-\KK^n}(p,{\cal X}) = - b_{\KK^n}(p,{\cal X}),\\ f_{k}(p,{\cal X}) &:=& u_k \max\{0, c_k +  b_{S_k}(p,{\cal X})  \}
\quad (k=1,\ldots,m).
\end{eqnarray}
They are horospherically convex functions on ${\cal B}$, where the Busemann subgradient $\gamma$ of $f_k$ at $(p,{\cal X})$ is given by
\begin{equation}\label{eqn:subgradient}
\gamma(t) = 
\left\{ \begin{array}{ll} \displaystyle
p-  t{\bf 1}   & {\rm if}\ k =0,\\ 
p+  t u_k{\bf 1}_{j_1,j_2,\ldots,j_{\dim S_k}} & {\rm if}\ k \geq 1,  c_k + b_{S_k}(p,{\cal X}) >0, \\ 
p & {\rm otherwise},
\end{array} \right.
\end{equation}
where $\gamma$ is considered in the ${\cal A}$-coordinate
for an apartment ${\cal A} = {\cal A}(v_1,v_2,\ldots,v_n)$ such  
that $X_i = \langle v_1,\ldots,v_i\rangle$ $(i \in [n])$ 
and $S_k = \langle v_{j_1},\ldots,v_{j_{\dim S_k}} \rangle$ (if $k \geq 1$).

Now, we consider the following horospherically convex optimization
\begin{equation}
\mbox{Min.} \quad f(p,{\cal X}) := \sum_{k=0}^{m} f_k(p,{\cal X}) \quad \mbox{s.t.} \quad  (p,{\cal X}) \in [{\bf 0}, c^* \KK^n],
\end{equation}
and apply the incremental Busemann subgradient method. 
The initial point $x^{0}$ is chosen, say, 
$
x^{0} = ({\bf 0},{\cal X}),
$
where ${\cal X}$ is any complete flag. 
From (\ref{eqn:subgradient}) 
we can choose $L > 0$ as $L = \sqrt{n}$ (since $0 \leq u_k \leq 1$).
The diameter $D$ of $[{\bf 0}, c_{\rm max}\KK^n]$ 
is given by $c_{\rm max} \sqrt{n}$ (Lemma~\ref{lem:[0,cK^n]}).
We choose a rational step size $t_i$ by
$t_i := D/(L(m+1) \lceil \sqrt{i+1} \rceil) = c_{\rm max}/((m+1) \lceil \sqrt{i+1} \rceil)$.
Throughout the algorithm, 
we maintain a basis $v_1,v_2,\ldots,v_n$ of $\KK^n$ and a vector $p \in \RR^{n}_{\downarrow}$ 
to represent points $x^i$ as $x^{i} = (p,{\cal X})$ with $X_j = \langle v_1,v_2,\ldots,v_j\rangle$ $(j \in [n])$.
\begin{description}
\item[Incremental Busemann subgradient method for (\ref{eqn:Busemann_LP}):]
\item[0:] $x^{0}= (p,{\cal X}) \leftarrow ({\bf 0},{\cal X})$, $i \leftarrow 0$.
\item[1:] $t_i \leftarrow c_{\rm max} /((m+1)\lceil \sqrt{i+1} \rceil)$.  
\item[2:] For $k=0,1,2,\ldots,m$, do 
\begin{description}
\item[2-0:] If $k=0$, then $p \leftarrow  \max ({\bf 0}, p - t_i {\bf 1} )$, and go to {\bf 2-5}.
 \item[2-1:] Compute a basis $v_1,v_2,\ldots,v_n$ and indices $j_1,j_2,\ldots,j_{\dim S_{k}}$
such that $X_i = \langle v_1,v_2,\ldots,v_i\rangle$ $(i \in [n])$ and 
$S_k = \langle v_{j_1},v_{j_2},\ldots, v_{j_{\dim S_{k}} }\rangle$.
\item[2-2:] If $c_k +  b_{S_k}(p,{\cal X}) = c_k -p^{\top}{\bf 1}_{j_1,j_2,\ldots,j_{\dim S_{k}}} \leq 0$, then go to {\bf 2-5}.
\item[2-3:] $p \leftarrow \min (c_{\rm max}{\bf 1}, p + t_iu_k{\bf 1}_{j_1,j_2,\ldots,j_{\dim S_{k}}})$.
\item[2-4:] Sort $p$ as $p_{i_1} \geq p_{i_2} \geq \cdots \geq p_{i_n}$, and update $(p,{\cal X})$ by $p_j \leftarrow p_{i_j}$ and $X_j \leftarrow \langle v_{i_1},v_{i_2},\ldots,v_{i_j} \rangle$ for $j \in [n]$.
\item[2-5]: $x^{i} \leftarrow (p,{\cal X})$.
\end{description}
\item[3:] $x^{i+1} \leftarrow x^i$, $i\leftarrow i+1$, 
and go to {\bf 1}.
\end{description}

In step {\bf 1}, we compute $\lceil \sqrt{i+1} \rceil$ from $\lceil \sqrt{i} \rceil$ by
$\lceil \sqrt{i+1} \rceil = 1 + \lceil \sqrt{i} \rceil$ 
if $i = \lceil \sqrt{i} \rceil^2$ and 
$\lceil \sqrt{i+1} \rceil = \lceil \sqrt{i} \rceil$ otherwise. 
In steps {\bf 2-0} and {\bf 2-3}, the bit-length of each component of $p$ 
increases at most the bit-length of $t_i$.
By Lemma~\ref{lem:n^3},
step {\bf 2-1} requires $O(n^3)$ arithmetic operations in $\KK$.
By Theorem~\ref{thm:incremental_subgradient}, 
we have the following:
\begin{Thm}\label{thm:main_cu}
For $\epsilon > 0$, 
the incremental Busemann subgradient method finds a solution $(p,{\cal X}) \in [{\bf 0}, c_{\rm max} \KK^n]$ with
\[
f(p,{\cal X}) -  \tau^*_{c,u} \leq \epsilon
\]
in  $O(c_{\rm max}^2 n^2 m^2 \epsilon^{-2})$ iterations, 
where each iteration requires $O(n^3m)$ arithmetic operations in $\KK$  and $O(n m \log n)$ arithmetic operations in $\QQ$.
The necessary bit-length for obtaining $p$ is bounded 
by a polynomial in $c_{\rm max},n,m,\epsilon^{-1}$, and the bit-length of $u$.
\end{Thm}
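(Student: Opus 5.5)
The plan is to obtain Theorem~\ref{thm:main_cu} from the convergence result Theorem~\ref{thm:incremental_subgradient}, applied to $f=\sum_{k=0}^m f_k$ on $K=[{\bf 0},c_{\rm max}\KK^n]$, and then to count operations. The first task is to confirm that the five standing assumptions of the incremental method hold. ${\cal B}$ is a Hadamard space with the geodesic extension property, by Lemma~\ref{lem:apartment} and the apartment structure. By Lemma~\ref{lem:[0,cK^n]}, $K$ is closed and convex with $D=c_{\rm max}\sqrt n$, and its projection is the explicit clipping (\ref{eqn:proj}). The initial point is $x^0=({\bf 0},{\cal X})\in K$. The bound $f\ge 0$ holds on $K$ because $b_{-\KK^n}(p,{\cal X})=\sum_i p_i\ge 0$. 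Attainment of $f^*$ follows because $f$ is continuous and piecewise affine on each of the finitely many combinatorial types, and only finitely many coefficient vectors (\ref{eqn:coefficient_vector}) occur. Alternatively, it follows directly from the finite-LP strong duality together with the preceding Theorem, which gives $f^*=\tau^*_{c,u}$.

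Next I verify that (\ref{eqn:subgradient}) gives valid Busemann subgradients with speed at most $L=\sqrt n$. For $k=0$ this is immediate from the preceding Lemma on subgradients of $b_{-\KK^n}$, and the ray has speed $\sqrt n$. For $k\ge1$ with $c_k+b_{S_k}(p,{\cal X})>0$, that Lemma gives $b_{S_k}(y)=b_{S_k}(x)+b_{\gamma'}(y)$, where $\gamma'$ is the unit-rate ray in direction ${\bf 1}_{j_1,\dots}$. Hence
\[
f_k(y)\ge u_k(c_k+b_{S_k}(y))=f_k(x)+u_kb_{\gamma'}(y),
\]
and $u_kb_{\gamma'}=b_\gamma$ for the reparametrized ray $\gamma$ of speed $u_k\sqrt{\dim S_k}\le\sqrt n$. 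In the remaining case $f_k(x)=0\le f_k(y)$, so the constant ray works.

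I must also check that steps \textbf{2-0} through \textbf{2-4} implement $\proj_K(\gamma(t_i))$. Moving along $\gamma$ stays in the apartment ${\cal A}(v_1,\dots,v_n)$, and clipping coordinatewise in ${\cal A}$-coordinates agrees with (\ref{eqn:proj}) after re-sorting in step \textbf{2-4}. Lower clipping is unnecessary in \textbf{2-3} and upper clipping is unnecessary in \textbf{2-0}, since the moves there are monotone. The chosen $t_i=c_{\rm max}/((m+1)\lceil\sqrt{i+1}\rceil)$ lies between $D/(2L(m+1)\sqrt{i+1})$ and $D/(L(m+1)\sqrt{i+1})$, because $\sqrt{i+1}\le\lceil\sqrt{i+1}\rceil\le 2\sqrt{i+1}$. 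Theorem~\ref{thm:incremental_subgradient} then gives the error bound
\[
4(1+\log 3)(m+1)\,c_{\rm max}\,n/\sqrt{N+2},
\]
which is at most $\epsilon$ once $N=O(c_{\rm max}^2n^2m^2\epsilon^{-2})$. Tracking $\min_i f(x^i)$ requires evaluating $f$, which is computable with the same basis computations. Together with $f^*=\tau^*_{c,u}$, this yields the claim.

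The per-iteration cost comes from $m$ calls to Lemma~\ref{lem:n^3}, which take $O(n^3m)$ operations in $\KK$. The work in $\QQ$ consists of the vector updates and the sort in \textbf{2-4}, which take $O(n\log n)$ operations each, so $O(nm\log n)$ in total. For bit-length, every coordinate of $p$ is a sum or difference of at most $N(m+1)$ step sizes $t_i u_k$, clipped to $[0,c_{\rm max}]$. All denominators divide $\prod_k\mathrm{den}(u_k)\cdot(m+1)\cdot\mathrm{lcm}\{\lceil\sqrt{i+1}\rceil: i\le N\}$. The lcm of the integers up to $\sqrt{N+1}$ has $O(\sqrt N)$ bits, and $\sqrt N$ is polynomial in $c_{\rm max},n,m,\epsilon^{-1}$. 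I expect this bit-length bookkeeping to be the main obstacle, since naive accumulation could blow up the denominators. The plan is to bound every coordinate by a common denominator as above, rather than tracking each update separately.
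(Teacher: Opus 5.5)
Your proposal is correct and follows the paper's own route: you plug $L=\sqrt n$, $D=c_{\rm max}\sqrt n$ and the rounded step sizes into Theorem~\ref{thm:incremental_subgradient}, and charge $O(n^3)$ per call of Lemma~\ref{lem:n^3} plus one sort per inner step, while spelling out details the paper leaves implicit (attainment of $f^*$, the scaling $b_\gamma=u_kb_{\gamma'}$, and that steps \textbf{2-0}--\textbf{2-4} realize $\proj_K(\gamma(t_i))$); your common-denominator bit-length bound is sharper than the paper's per-update count, which already suffices because there are only polynomially many updates, so naive accumulation is not actually an obstacle. One small imprecision is that the bases from step \textbf{2-1} are computed at the intermediate inner-loop points rather than at $x^i$, so selecting the best iterate needs a separate evaluation of $f(x^i)$ with its own $m$ calls of Lemma~\ref{lem:n^3}, which still costs only $O(n^3m)$ per iteration and leaves your bounds unchanged.
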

Theorem~\ref{thm:main} is obtained from this theorem, Lemma~\ref{lem:recover}, and (\ref{eqn:recover}).

\paragraph{Computing $\tau^*_{c,{\bf 1}}$ exactly.}
We show that Theorem~\ref{thm:main_cu} yields an FPT-algorithm for 
computing $\tau^*$, parametrized by the dimension $n$.
\begin{Lem}
$\tau_{c,{\bf 1}}^*$ is a rational number $a/b$ 
with $b \leq  n^{n/2}$.
\end{Lem}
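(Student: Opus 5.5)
The plan is to show that some optimal solution of the reformulation (\ref{eqn:building_LP}) with $u={\bf 1}$ is a vertex of a finite-dimensional polyhedron whose constraint matrix has entries in $\{0,1\}$ and $n$ free variables. Cramer's rule together with Hadamard's inequality then bounds the denominator.

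First, fix the complete flag ${\cal X}$ of some optimal solution of (\ref{eqn:building_LP}); one exists by the flag-support lemma and the finiteness of the coefficient vectors (\ref{eqn:coefficient_vector}). With ${\cal X}$ fixed, let $a^k_i := \dim (S_k \cap X_i) - \dim(S_k\cap X_{i-1}) \in \{0,1\}$. The problem becomes the finite LP
\[
\min \Bigl\{ \sum_i p_i + \sum_k y_k \ \Bigm|\ \sum_i a^k_i p_i + y_k \geq c_k,\ p_1\ge\cdots\ge p_n \ge 0,\ y\ge 0 \Bigr\},
\]
whose optimal value is $\tau^*_{c,{\bf 1}}$.

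Second, eliminate $y$. This LP is the minimum over $p$ of the convex piecewise linear function
\[
g(p) := \sum_i p_i + \sum_k \max\{0, c_k - a^{k\top} p\}
\]
on the cone $p_1\ge \cdots\ge p_n\ge 0$. Its infimum is attained, because the objective is at least $\sum_i p_i \ge 0$ and we may also impose $p_1 \le c_{\max}$. So the minimum is attained at a vertex of some cell of the polyhedral subdivision, i.e. of the polytope
\[
P_J := \{p : p_1\ge\cdots\ge p_n\ge 0,\ p_1 \le c_{\max},\ a^{k\top}p \ge c_k\ (k\in J),\ a^{k\top} p \le c_k\ (k\notin J)\}
\]
on which $g$ is linear. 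A vertex $p^*$ is the unique solution of $n$ linearly independent tight equations. Each equation has the form $p_i - p_{i+1} = 0$, $p_n = 0$, $p_1 = c_{\max}$, or $a^{k\top}p = c_k$, all with integer right-hand sides.

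Third, bound the denominator. Change variables to $\lambda_i = p_i - p_{i+1}$ (with $p_{n+1}:=0$), so that $p = U\lambda$ with $U$ the upper-triangular all-ones matrix. In these coordinates every tight equation has a $0/1$ coefficient row. Indeed, $p_i-p_{i+1}=0$ becomes $\lambda_i = 0$, and $p_1 = c_{\max}$ becomes ${\bf 1}^\top \lambda = c_{\max}$. Moreover $a^{k\top}U$ has $i$-th entry $\dim(S_k\cap X_i)$; this lies in $\{0,\ldots,n\}$ and is not $0/1$, which is the main obstacle. To fix it, use the fact that the tight rows of type $\lambda_i = 0$ let us restrict to the coordinates $I = \{i : \lambda_i \neq 0\}$, i.e. the breakpoints. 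Then, rather than $\lambda$, use the coordinates $p_{i}$ for $i$ ranging over the representatives of the distinct value blocks of $p$. Since $a^k$ is a $0/1$ vector, the coefficient of a block representative in $a^{k\top}p = c_k$ is the number of ones of $a^k$ in that block. This is still not $0/1$.

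The better route is to bound the denominator via $\|\text{row}\|_2$ instead of entries. Hadamard's inequality gives $|\det M| \le \prod_{\text{rows}} \|r\|_2$. In the original $p$-coordinates every row is $a^k$ (with at most $n$ ones, so norm at most $\sqrt n$), $e_1$, $e_n$, or $e_i - e_{i+1}$ (norm $\sqrt2 \le \sqrt n$ for $n\ge2$). Hence the $n\times n$ system matrix $M$ has $|\det M| \le n^{n/2}$. By Cramer's rule, $p^*$ has entries in $\frac{1}{\det M}\ZZ$. Then $\tau^* = g(p^*)$ is an integer combination of the entries of $p^*$ plus integers, because on the cell $g$ is linear with integer coefficients and integer constant $\sum_{k\in J} c_k$. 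Therefore $\tau^*$ has denominator dividing $|\det M| \le n^{n/2}$. The remaining care is only the case $n=1$ and checking that the row norms are at most $\sqrt n$, which is immediate.
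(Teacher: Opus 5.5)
Your proof is correct. Its core is the same as the paper's: fix the flag, pass to a vertex in the $p$-coordinates of a polyhedron whose constraint rows have entries in $\{0,\pm1\}$ and integer right-hand sides, then apply Cramer's rule and Hadamard's inequality.

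The difference is how you get rid of the penalty variables $y$. The paper invokes Lemma~\ref{lem:recover} to obtain an optimal solution with $y={\bf 0}$. That lemma is specific to $u={\bf 1}$, and it may change the flag, since it raises a coordinate inside an apartment and re-sorts. Then $\tau^*_{c,{\bf 1}}=\sum_i p_i$, and an extreme optimal $p$ of $\min\{{\bf 1}^\top p : a^{k\top}p\ge c_k,\ p_1\ge\cdots\ge p_n\ge 0\}$ gives the $n\times n$ system directly. You instead keep the penalties and stay inside one fixed flag, using the cells $P_J$ on which $g$ is affine with integer coefficients.

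What each route buys:
\begin{itemize}
\item Your route needs neither Lemma~\ref{lem:recover} nor any building geometry. It also adapts to rational $u$: vertices of the cells have the same denominators, and the bound is multiplied by the common denominator of $u$.
\item The paper's route is shorter once that lemma is available. It also yields an optimal fractional hitting subspace with $y={\bf 0}$ whose coefficients $\lambda_i=p_i-p_{i+1}$ have denominators at most $n^{n/2}$.
\end{itemize}

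Two small remarks. First, with $P_J$ as you defined it ($k\in J$ means the penalty vanishes), the constant term of $g$ on $P_J$ is $\sum_{k\notin J}c_k$, not $\sum_{k\in J}c_k$. This is harmless, since only integrality matters. Second, the detour through the $\lambda$-coordinates is unnecessary. In the $p$-coordinates every constraint row already has entries in $\{0,\pm1\}$, so its Euclidean norm is at most $\sqrt n$ without any case check; this is exactly the observation the paper uses.
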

\begin{proof}
Let $((p,{\cal X}),y)$ be an optimal solution of (\ref{eqn:Busemann_LP}).
By Lemma~\ref{lem:recover}, we may assume $y ={\bf 0}$.
Then $\tau_{c,{\bf 1}}^* = \sum_{i=1}^n p_i$.
Observe that $p$ is also an optimal solution of the LP with ${\cal X}$ fixed.
We may further assume that $p$ is an extreme optimal solution.
Since
$\dim S_k \cap X_i- \dim S_k \cap X_{i-1} \in \{0,1\}$,
this LP is a combinatorial LP, that is, the coefficient matrix consists of $0,1,-1$. Since $p$ is a unique solution 
of the linear equations obtained from this inequality system, the common denominator of $p_1,p_2,\ldots,p_n$ is bounded 
by the absolute value of the determinant of an $0,1,-1$ matrix of size at most $n$.
By Hadamard's inequality, it is at most $n^{n/2}$, from which the result follows.
\end{proof}
Thus, by the continued fraction method \cite[Corollary 6.3a]{SchrijverIP},
we can determine $\tau_{c,{\bf 1}}^*$ if $f(p,{\cal X})-\tau_{c,{\bf 1}}^* \leq 1/(2 n^{n})$.
Such $(p,{\cal X})$ is obtained by the incremental subgradient method 
with accuracy $\epsilon <  1/(2 n^{n})$. 
\begin{Thm}\label{thm:fpt_c}
We can compute $\tau_{c,{\bf 1}}^*$ in $O(c_{\rm max}^2 n^{2n+5}m^3)$ time.
\end{Thm}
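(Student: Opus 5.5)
The plan is to combine Theorem~\ref{thm:main_cu} with the rationality lemma just proved, and then recover the exact value by rounding. First, observe that with $u={\bf 1}$ the value $f(p,{\cal X})$ of any feasible point of (\ref{eqn:Busemann_LP}) is an upper bound on $\tau^*_{c,{\bf 1}}$. By the preceding lemma, $\tau^*_{c,{\bf 1}}=a/b$ with $1\le b\le n^{n/2}$. Two distinct rationals whose denominators are at most $n^{n/2}$ differ by at least $1/n^{n}$. Hence it suffices to find a rational $\theta$ with $\tau^*_{c,{\bf 1}}\le \theta\le \tau^*_{c,{\bf 1}}+1/(2n^n)$. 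Then $\tau^*_{c,{\bf 1}}$ is the unique fraction with denominator at most $n^{n/2}$ within $1/(2n^n)$ of $\theta$, and the continued fraction method \cite[Corollary 6.3a]{SchrijverIP} finds it in time polynomial in the bit-length of $\theta$.

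Next, run the incremental Busemann subgradient method with $\epsilon:=1/(2n^n)$. This is the point where we fix the stopping rule. Theorem~\ref{thm:main_cu} bounds the best iterate among $x^1,\dots,x^N$, so we cannot just take the last point. Instead we evaluate $f(x^i)$ at every iteration and keep the minimum. Each evaluation needs the bases from step~2-1 for every $S_k$, which costs $O(n^3m)$ and so does not change the per-iteration cost. We output $\theta:=\min_i f(x^i)$, which is rational since $p$ has rational entries.

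For the running time, the number of iterations is $O(c_{\rm max}^2n^2m^2\epsilon^{-2})=O(c_{\rm max}^2n^{2n+2}m^2)$. Each iteration costs $O(n^3m)$ operations in $\KK$ plus $O(nm\log n)$ rational operations. This gives $O(c_{\rm max}^2n^{2n+5}m^3)$ in total.

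The main obstacle I expect is controlling the cost of the rational arithmetic. We need the operations in $\QQ$ on the entries of $p$ to count as $O(1)$ each, or at least to be dominated by the stated bound. Theorem~\ref{thm:main_cu} says the bit-length is polynomial in $c_{\rm max},n,m,\epsilon^{-1}$, but here $\epsilon^{-1}=2n^n$. I would argue that all step sizes $t_i=c_{\rm max}/((m+1)\lceil\sqrt{i+1}\rceil)$ share denominators dividing $(m+1)\cdot\mathrm{lcm}\{1,\dots,\lceil\sqrt{N+1}\rceil\}$. I would then accept the algebraic-RAM convention (A) in its rational counterpart, so that these operations are counted at unit cost, as the paper implicitly does in Theorem~\ref{thm:main_cu}. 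The final continued-fraction step is polynomial in that bit-length and is dominated by the iteration count.
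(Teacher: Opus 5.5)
Your proposal is correct and matches the paper's argument. It uses the denominator bound $b\le n^{n/2}$, runs the method of Theorem~\ref{thm:main_cu} to accuracy about $1/(2n^n)$ (so $O(c_{\rm max}^2n^{2n+2}m^2)$ iterations at $O(n^3m)$ each), and recovers $\tau^*_{c,{\bf 1}}$ by the continued fraction method. Your extra care about keeping the best iterate and counting rational operations at unit cost only makes explicit what the paper leaves implicit; note that the paper takes $\epsilon<1/(2n^n)$ strictly, which fits the strict inequality required in \cite[Corollary 6.3a]{SchrijverIP}.
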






\section{Applications}\label{sec:application}

\subsection{Fractional linear matroid parity}
The fractional linear matroid parity problem is the case where $\dim S_k = 2$ for all $k \in [m]$.
In this case, it is known~\cite{GP13,
VandeVate92} that LP~(\ref{eqn:weighted}) and its dual~(\ref{eqn:dual_weighted}) have half-integrality. 
For simplicity, we consider the unweighted and the uncapacitated case $c=u = {\bf 1}$.
A solution $(p,{\cal X})$ in (\ref{eqn:Busemann_LP}) is simply called  {\em feasible} 
if $((p,{\cal X}),{\bf 0})$ is feasible to (\ref{eqn:Busemann_LP0}).
Then, a half-integral feasible solution $(p,{\cal X})$ is of the form
\begin{equation}
(p,{\cal X}) = \frac{1}{2} X + \frac{1}{2} Y 
\end{equation}
for some $X,Y \in {\cal S}(\KK^n)$ with $X \subseteq Y$, where we use the right expression of (\ref{eqn:expression}). 
The feasibility condition in (\ref{eqn:Busemann_LP0}) is written as 
\begin{equation}
\frac{1}{2} \dim X \cap S_k + \frac{1}{2} \dim Y \cap S_k \geq 1 \quad (k \in [m]).
\end{equation}
Such a pair $(X,Y)$  
is nothing but a (nested) {\em 2-cover}~in~\cite{CLV01a,CLV01b}. Thus, we have:
\begin{Thm}[\cite{CLV01b}]
$\tau^*$ is equal to 
the minimum of $(\dim X+ \dim Y)/2$ 
over all $2$-covers~$(X,Y)$.
\end{Thm}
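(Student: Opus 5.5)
We derive this statement from the Busemann formulation together with the known half-integrality, which is what the paper sets up just before the theorem. Specialize to $c=u={\bf 1}$ and $\dim S_k=2$. By the theorem identifying $\tau^*_{c,u}$ with the optimal value of (\ref{eqn:Busemann_LP}), and by Lemma~\ref{lem:recover}, $\tau^*$ equals the minimum of $\sum_i p_i$ over points $(p,{\cal X})\in[{\bf 0},\KK^n]$ that are feasible, i.e.\ satisfy $-b_{S_k}(p,{\cal X})\ge 1$ for all $k$. Here $c_{\max}=1$, so $0\le p_i\le 1$.

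\emph{Easy direction ($\tau^*\le$ the 2-cover minimum).} Given a nested 2-cover $X\subseteq Y$, take the point $\frac12 X+\frac12 Y$ in the expression (\ref{eqn:expression}). This corresponds to $\lambda(X)=\lambda(Y)=\frac12$ in (\ref{eqn:fractional_hitting_subpace}); if $X=Y$, set $\lambda(X)=1$. The 2-cover condition is exactly the feasibility constraint, and the objective is $(\dim X+\dim Y)/2$. Hence $\tau^*$ is at most the minimum over 2-covers.

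\emph{Hard direction.} Invoke the half-integrality of the dual (\ref{eqn:dual_weighted}) for $\dim S_k=2$ \cite{GP13,VandeVate92}. This gives an optimal dual solution $\lambda$ with values in $\frac12\ZZ$. Before using it, check that the uncrossing lemma preserves half-integrality. The uncrossing step subtracts $\epsilon=\min\{\lambda(X),\lambda(Y)\}$, which is itself half-integral, so the resulting flag-supported optimum stays half-integral. Consequently $p_i=\sum_{j\ge i}\lambda_j\in\frac12\ZZ$. Clipping by the lemma preceding (\ref{eqn:Busemann_LP0}) gives $p_1\le 1$, so each $p_i\in\{0,\frac12,1\}$. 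A nonincreasing vector with such entries has the form $\frac12{\bf 1}_{[\dim X]}+\frac12{\bf 1}_{[\dim Y]}$, where $X=X_a$, $Y=X_b$, $a=\#\{i:p_i=1\}$ and $b=\#\{i:p_i>0\}$. This point is therefore $\frac12X+\frac12Y$ with $X\subseteq Y$. Its feasibility, read through Lemma~\ref{lem:b_S}, is the 2-cover condition, and its objective is $(\dim X+\dim Y)/2$. So $\tau^*$ is attained by a 2-cover.

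The main obstacle is making sure the cited half-integrality applies in the right form. If it is stated only for the finite-dimensional reduction, or only for linear matroid parity rather than for subspaces of arbitrary $\KK^n$, it must be transferred to the flag formulation. A fallback is a direct argument: fix the optimal flag ${\cal X}$. The LP in $p$ then has $0,\pm1$ constraints with exactly two $1$'s per covering row, given by the positions $j_1,j_2$ of $S_k$ in an adapted basis (Lemma~\ref{lem:n^3}). It is a vertex-cover-type LP with monotonicity constraints. One then argues half-integrality of its extreme points by the standard perturbation: shift the non-half-integral coordinates by $\pm\delta$, choosing the direction so that each tight constraint is preserved.
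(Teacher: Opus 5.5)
Your proof is correct, but the part that actually carries it is your fallback, and that is a genuinely different route from the paper's. The main route is nearly circular. After uncrossing, a half-integral flag-supported optimum with $p_1\le 1$ is literally a point $\frac12 X+\frac12 Y$, so quoting half-integrality of (\ref{eqn:dual_weighted}) amounts to quoting the theorem in another form.

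It also silently drops $y$. A half-integral optimum with $c=u={\bf 1}$ may have $y_k\in\{\frac12,1\}$, and this must be absorbed while staying half-integral. The step of Lemma~\ref{lem:recover} does this, since it raises $p_{j_1}$ by $\alpha=1-p_{j_1}-p_{j_2}\in\frac12\ZZ$.

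The fallback, by contrast, is sound. For the fixed flag, the rows $p_{j_1(k)}+p_{j_2(k)}\ge 1$, $p_i-p_{i+1}\ge 0$, $p_n\ge 0$, $p_1\le 1$ each have at most two entries $\pm1$ and integral right-hand sides. Shift the coordinates with value in $(0,\frac12)$ by $+\delta$ and those in $(\frac12,1)$ by $-\delta$. Every tight row stays tight:
\begin{itemize}
\item a tight covering row joins one value from each class, or two half-integers;
\item a tight monotonicity row joins equal values;
\item the tight bound rows involve only the values $0$ and $1$, which are not shifted.
\end{itemize}
Hence every vertex lies in $\{0,\frac12,1\}^n$, so a nonincreasing optimal vertex is $\frac12({\bf 1}_{[a]}+{\bf 1}_{[b]})$, as you say.

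The paper never passes to vertices. It takes an arbitrary feasible $(p,{\cal X})$ with $0\le p\le 1$ and greedily peels $p-\frac12{\bf 1}$. This gives $p=\sum_i\rho_i\frac12({\bf 1}_{[\alpha_i]}+{\bf 1}_{[\beta_i]})$ with nested indices $\alpha_\ell\le\cdots\le\alpha_1<\beta_1\le\cdots\le\beta_\ell$. It then shows that each $(X_{\alpha_i},X_{\beta_i})$ is a 2-cover. If one piece missed $S_k$, then $\alpha_i$ and $\beta_i$ lie before the respective jump positions of $S_k$. Nesting together with $\dim S_k=2$ then forces every other piece to contribute at most $1$, so $p^\top d^{(k)}<1$, contradicting feasibility.

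Your argument is shorter and uses only the two-nonzeros-per-row structure. Applied to the flag of any feasible point, it even recovers the paper's convex-combination statement, but only existentially, through the vertices of that polytope. The paper's decomposition is explicit and exhibits the nested structure of the pieces. That explicitness is exactly what the subsequent algorithm uses to turn an approximate solution with error below $\frac12$ into an optimal 2-cover.
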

We give a procedure to construct a convex combination of $2$-covers from any feasible solution $(p,{\cal X})$, which also proves the above theorem. 
\begin{Lem}
For a feasible solution $(p,{\cal X})$, decompose $p$ as
\begin{equation}\label{eqn:representation}
p = \frac{1}{2} {\bf 1} + \sum_{i=1}^{\ell+1} \rho_i \frac{{\bf 1}_{[\alpha_i]} - {\bf 1}_{[\beta_i+1,n]}}{2} 
= \sum_{i=1}^{\ell+1} \rho_i \frac{{\bf 1}_{[\alpha_i]} + {\bf 1}_{[\beta_i]}}{2}, 
\end{equation}
where $0 = \alpha_{\ell+1} \leq \alpha_{\ell} \leq \cdots \leq \alpha_2 \leq \alpha_{1} 
< \beta_{1} \leq \beta_{2} \leq \cdots \leq \beta_\ell \leq \beta_{\ell+1} = n$, and $\rho_i \geq 0$ $(i \in [\ell+1])$ with $\sum_{i=1}^{\ell+1} \rho_i =1$.
Then it holds
\begin{equation}
(p,{\cal X}) = \sum_{i=0}^{\ell+1} \rho_i \frac{X_{\alpha_i} +  X_{\beta_i}}{2},
\end{equation}
where each $(X_{\alpha_i},X_{\beta_i})$ is a $2$-cover.
\end{Lem}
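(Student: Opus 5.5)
We first settle the existence of the decomposition (\ref{eqn:representation}) and then show that every pair $(X_{\alpha_i},X_{\beta_i})$ is a $2$-cover. The identity $(p,{\cal X}) = \sum_i \rho_i (X_{\alpha_i}+X_{\beta_i})/2$ in the building then follows from the correspondence (\ref{eqn:expression}). Indeed, inside the chamber of ${\cal X}$ the map $p\mapsto\sum_j (p_j-p_{j+1})X_j$ is linear in $p$, and ${\bf 1}_{[\alpha]}$ corresponds to $X_\alpha$ (with $X_0=\{{\bf 0}\}$ the zero element). So $({\bf 1}_{[\alpha_i]}+{\bf 1}_{[\beta_i]})/2$ corresponds to $(X_{\alpha_i}+X_{\beta_i})/2$, and the nonnegative combination carries over because all terms lie in the same chamber.

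\textbf{Existence of the decomposition.} We use $p\in[{\bf 0},{\bf 1}]$ with $p$ nonincreasing, since $c_{\max}=1$. Write $p = \frac12{\bf 1} + \frac12 r$ with $r := 2p-{\bf 1}\in[-1,1]^n$ nonincreasing. Such an $r$ is a convex combination of vectors ${\bf 1}_{[\alpha]}-{\bf 1}_{[\beta+1,n]}$ with $\alpha<\beta$: these are the $\{1,0,-1\}$-valued nonincreasing vectors, namely ones on $[\alpha]$, zeros on $[\alpha+1,\beta]$, and minus ones after. We take a layer-cake (threshold) decomposition, using for each $\theta\in[0,1)$ the vector with $+1$ where $r_j>\theta$ and $-1$ where $r_j<-\theta$. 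Integrating over $\theta$ gives $r$. The thresholds are nested, which yields $\alpha$ nonincreasing and $\beta$ nondecreasing in $\theta$. The case $\theta$ near $1$ gives $\alpha_{\ell+1}=0$, $\beta_{\ell+1}=n$, and if $p\equiv 1/2$ only this term survives. The condition $\alpha_1<\beta_1$ needs care when no $r_j$ equals $0$; we allow $\rho_i$ possibly zero, or handle it by merging terms. Finally, $\frac12{\bf 1}+\frac12({\bf 1}_{[\alpha]}-{\bf 1}_{[\beta+1,n]}) = \frac12({\bf 1}_{[\alpha]}+{\bf 1}_{[\beta]})$ gives the second form, and the decomposition is computed by sorting the distinct values of $|r_j|$.

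\textbf{2-cover property (the main step).} Fix $k$ and put $a_j := \dim S_k\cap X_j$, which is nondecreasing in $j$ with increments in $\{0,1\}$, $a_0=0$ and $a_n=2$. Feasibility means $\sum_j p_j(a_j-a_{j-1})\ge 1$, that is, $p_{s}+p_{t}\ge 1$, where $s<t$ are the two jump indices of $a$. We need $a_{\alpha_i}+a_{\beta_i}\ge 2$ for every $i$ with $\rho_i>0$. Suppose instead that $a_{\alpha_i}+a_{\beta_i}\le 1$ for some such $i$. Then either $a_{\beta_i}\le 1$, which gives $t>\beta_i$, or $a_{\beta_i}=2$ and $a_{\alpha_i}=0$ fails. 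The concern is therefore the cases ($s>\alpha_i$, $t>\beta_i$) or ($s>\alpha_i$ with $a_{\beta_i}=1$).

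In terms of $r$, we must show that $r_s+r_t\ge 0$ forces the threshold vector at every level $\theta$ to have coordinate sum at least $0$ on $\{s,t\}$. This holds because the threshold map $\mathrm{sgn}_\theta$ is monotone and odd. If $r_s+r_t\ge0$ with $r_s\ge r_t$, then $r_s\ge|r_t|$. If $r_t<-\theta$, then $r_s>\theta$, so the pair contributes $1-1=0$. Otherwise $r_t\ge-\theta$, so the $t$-coordinate is at least $0$ and the $s$-coordinate is also at least $0$. Hence $\frac12(\text{entry}_s+\text{entry}_t)+1\ge1$, which is exactly $\frac12(a_{\alpha_i}+a_{\beta_i})\ge1$ after translating ${\bf 1}_{[\alpha]}$ entries at $s,t$ into $\dim S_k\cap X_\alpha$.

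I expect this translation to be the main obstacle. The key fact is that $\dim S_k\cap X_\alpha = |\{s,t\}\cap[\alpha]|$, which holds in the apartment adapted to ${\cal X}$ and $S_k$ (Lemma~\ref{lem:n^3}). With that identity, the inequality follows from the level-wise statement above.
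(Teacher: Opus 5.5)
Your proof is correct, but the key step is argued differently from the paper.

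\textbf{The paper's route.} The paper argues by contradiction using only the chain order. If $(X_{\alpha_i},X_{\beta_i})$ fails for some $S_k$, then $\alpha_i<\alpha_*$ and $\beta_i<\beta_*$ for the two jump indices $\alpha_*<\beta_*$. Nestedness ($\alpha_j\le\alpha_i$ for $j>i$, $\beta_j\le\beta_i$ for $j<i$, and $\alpha_j\le\beta_j$) then forces every other pair to satisfy $\frac12(\dim X_{\alpha_j}\cap S_k+\dim X_{\beta_j}\cap S_k)\le 1$. So the $\rho$-weighted average, which equals $p^{\top}d^{(k)}\ge 1$, drops below $1$; this tacitly uses $\rho_i>0$.

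\textbf{Your route.} You prove the inequality directly, level by level. Feasibility is $r_s+r_t\ge 0$, hence $r_s\ge|r_t|$, and the monotone odd thresholding $\mathrm{sgn}_\theta$ preserves $w_s+w_t\ge 0$. That is exactly the $2$-cover inequality, because $\dim S_k\cap X_\alpha=|\{s,t\}\cap[\alpha]|$. This identity is just telescoping the $0/1$ increments and needs no apartment, so it is not really an obstacle.

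\textbf{What each buys.} Your version explains why the threshold decomposition works and gives the inequality at every level. The paper's version does not care which nested decomposition is used. To make your argument cover an arbitrary decomposition of the stated form, not just the one you build, add one observation. Along a chain with $\alpha$ nonincreasing, $\beta$ nondecreasing and $\alpha\le\beta$, each coordinate keeps a fixed sign. Hence every term with $\rho_i>0$ equals $\mathrm{sgn}_\theta(r)$ for $\theta$ inside its level interval. So any such decomposition is your layer-cake one (equivalently, the paper's greedy peeling) up to zero-weight terms.

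\textbf{The condition $\alpha_1<\beta_1$.} Your hedge here cannot work as stated. If no $p_j$ equals $1/2$, e.g.\ $p=(3/4,1/4)$ with $r=(1/2,-1/2)$, the first positive-weight term must be $(1,-1)$, so $\alpha=\beta=1$. By nestedness, no decomposition with $\alpha_1<\beta_1$ exists, and allowing zero weights or merging terms does not help. The strict inequality in the statement should be read as $\alpha_1\le\beta_1$, which is also what the paper's own procedure produces. Neither your argument nor the paper's uses strictness.
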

The representation (\ref{eqn:representation}) is constructed easily: 
Let $\kappa \leftarrow p_1-p_n$, $q \leftarrow p -(\kappa/2){\bf 1}$, $i \leftarrow 1$.
Let $\alpha_i \leftarrow $ the maximum index $j$ with $(q^i)_j>0$, $\beta_i \leftarrow $ the maximum index $j$ with $(q^i)_j \geq 0$, 
let $\rho_i \leftarrow 2 \min (q_{\alpha_i}, -q_{\beta_i+1})$,
and let $q \leftarrow q - \rho_i ({\bf 1}_{[\alpha_i]}-{\bf 1}_{[\beta_i+1,n]})/2$, $i \leftarrow i+1$. 
Repeat it until $q = {\bf 0}$.
After that, let $\rho_{i} \leftarrow 1 - \kappa$, where $i = \ell +1$.
\begin{proof}
Suppose, for a contradiction, that  $(X_{\alpha_i},X_{\beta_i})$ is not a $2$-cover. There is $k \in [m]$ such that
\begin{equation}\label{eqn:violated}
\frac{1}{2}\dim X_{\alpha_i} \cap S_k + \frac{1}{2}\dim X_{\beta_i} \cap S_k <  1.
\end{equation}
That is, $\dim X_{\alpha_i} \cap S_k = 0$ and $\dim X_{\beta_i} \cap S_k \leq 1$.
Define indices $\alpha_*, \beta_*$ with $\alpha_* < \beta_*$ 
by $\{\alpha_*,\beta_*\} = \{i \in [n] \mid \dim S_k \cap X_i - \dim S_k \cap X_{i-1} =1\}$. 
Then, it holds $\alpha_i < \alpha_*$ and $\beta_{i} < \beta_*$.
For other indices $j (\neq i)$ it must hold
\begin{equation}\label{eqn:other}
\frac{1}{2}\dim X_{\alpha_{j}} \cap S_k + \frac{1}{2}\dim X_{\beta_{j}} \cap S_k \leq 1.
\end{equation}
Define $d^{(k)} \in \{0,1\}^n$ by $(d^{(k)})_i := \dim S_k \cap X_i - \dim S_k \cap X_{i-1}$.
Then we obtain a contradiction to the feasibility:
\begin{equation*}
1 \leq p^\top d^{(k)} 
= \sum_{i=1}^{\ell+1} \rho_i \left(\frac{{\bf 1}_{[\alpha_i]} + {\bf 1}_{[\beta_i]}}{2}\right)^{\top} d^{(k)} \nonumber \\
 =  \sum_{i=1}^{\ell+1} \rho_i \frac{\dim X_{\alpha_i} \cap S_k + \dim X_{\beta_i} \cap S_k}{2} < 1.   \qedhere
\end{equation*}
\end{proof}
Therefore,  
by the incremental Busemann subgradient method with $\epsilon < 1/2$ and 
$u = {\bf 1}$, 
we obtain a solution $(p,{\cal X})$ with
$
\sum_{i=1}^n p_i - \tau^* < 1/2.
$
Decompose $p$ as in (\ref{eqn:representation}), 
and obtain 2-covers $(X_{\alpha_i}, X_{\beta_i})$, one of which is an optimal 2-cover. 
Thus, we can obtain an optimal 2-cover and optimal value $\tau^*$ in $O(n^5m^3)$ time.
For $k=1,2,\ldots,m$, 
apply (again) the incremental Busemann subgradient method with $u_k = 0,1/2$.
Fix $u_k$ as the minimum of $u_k = 0,1/2,1$ so that the optimal value does not decrease. 
Then the resulting $u$ is a half-integral optimal fractional matching. 
The whole procedure is done
in $O(n^5m^4)$ time.
This is slower than the CLV-algorithm~\cite{CLV01a} that requires $m^2$ calls of a linear matroid intersection algorithm; the current fastest one~\cite{Harvey2009Algebraic} has time complexity $O(n^{\omega-1}m)$, where $\omega$ denotes the matrix multiplication exponent.
An advantage of our algorithm is its simplicity, 
whereas the CLV-algorithm is complicated and needs a rather involved proof of correctness.
We hope that further development 
on horospherically convex optimization 
will lead to a better algorithm.

\subsection{Brascamp-Lieb polytopes}

Our motivation for studying the fractional subspace packing problem~(\ref{eqn:fractional_subspace_packing}) comes from the {\em Brascamp-Lieb inequality}~\cite{BrascampLieb,Lieb}.
Given surjective linear maps $B_k:\RR^n \to\RR^{r_k}$ and nonnegative reals 
$u_k \in \RR_{\geq 0}$ for $k\in [m]$ ({\em BL-datum}), 
the Brascamp-Lieb inequality says that 
for some constant $C \in \RR_{\geq 0} \cup \{+\infty\}$ it holds
\begin{equation}\label{eqn:BL}
\int_{x \in \RR^n} \prod_{k=1}^m f_k(B_kx)^{u_k}
{\rm d}x
\leq C \prod_{k=1}^m \left( \int_{x_k \in \RR^{r_k}} f_k(x_k) {\rm d}x_k \right)^{u_k} 
\end{equation}
for any nonnegative-valued measurable functions $f_k: \RR^{r_k} \to \RR_{\geq 0}$ $(k\in [m])$.
The minimum possible $C$, denoted by $C_{\rm BL}$, is called 
the {\em BL-constant}.
The interesting case of the BL-inequality is 
when $C_{\rm BL} < \infty$, and is characterized by Bennett et al.~\cite{BCCT08} as follows:
\begin{Thm}[\cite{BCCT08}]\label{thm:BCCT}
Given BL-datum $B_k:\RR^n \to  \RR^{r_k}$, $u_k \in \RR_{\geq 0}$ $(k\in [m])$,
the BL-constant is finite if and only if 
it holds 
 $n = \sum_{k=1}^{m}r_ku_k$  and
$\sum_{k=1}^m u_k \dim B_k V \geq \dim V$ 
for every $V \in  {\cal S}(\RR^n)$.
\end{Thm}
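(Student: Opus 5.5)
We would prove Theorem~\ref{thm:BCCT} by the route of Bennett, Carbery, Christ, and Tao. The main tool is Lieb's theorem that Gaussians suffice:
\[
C_{\rm BL}^2=\sup_{A_k \succ 0}\frac{\prod_{k}(\det A_k)^{u_k}}{\det\bigl(\sum_k u_k B_k^{\top}A_kB_k\bigr)}.
\]
This turns both directions into statements about positive definite matrices, and we take it as given.

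\emph{Necessity.} First apply the dilations $f_k(\cdot)\mapsto f_k(\lambda\,\cdot)$ to (\ref{eqn:BL}). The left side scales by $\lambda^{-n}$ and the right side by $\lambda^{-\sum_k u_kr_k}$, so finiteness of $C_{\rm BL}$ forces $n=\sum_k u_kr_k$. Next fix $V\in{\cal S}(\RR^n)$, and in the Gaussian formula take $A_k=t\,P_{B_kV}+P_{(B_kV)^\perp}$, where $P$ denotes orthogonal projection. The numerator then grows like $t^{\sum_k u_k\dim B_kV}$. Restricted to $V$, the quadratic form $\sum_k u_kB_k^\top A_kB_k$ is of order $t$, while on $V^\perp$ it stays bounded. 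Using a block (Schur complement) estimate, the denominator grows at least like $c\,t^{\dim V}$. Letting $t\to\infty$, the ratio is unbounded unless $\sum_k u_k\dim B_kV\ge\dim V$.

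\emph{Sufficiency.} We show that the Gaussian supremum is finite by induction on $n$. Call a subspace $V$ with $0\ne V\ne\RR^n$ \emph{critical} if $\sum_k u_k\dim B_kV=\dim V$. We treat two cases.

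\textbf{Case 1 (no critical subspace).} All the subspace inequalities are strict. We would show that the Gaussian functional is invariant under $A_k\mapsto\lambda A_k$; this invariance is exactly where $n=\sum_k u_kr_k$ enters. We would then show that it tends to $0$ along any sequence of normalized $(A_k)$ escaping to the boundary of the cone. Decomposing such a degenerate sequence along eigenspaces gives a flag, and the strict inequalities on the members of this flag force decay. Hence the supremum is attained on a compact set and is finite.

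\textbf{Case 2 (a critical $V$ exists).} We split the datum into two pieces. The first is the restricted datum $B_k|_V:V\to B_kV$. The second is the quotient datum $\RR^n/V\to\RR^{r_k}/B_kV$. Both inherit the scaling identity from criticality. For the restricted datum, the subspace inequalities hold because subspaces of $V$ are subspaces of $\RR^n$. For the quotient datum, they follow by combining the hypothesis for $W+V$ with criticality of $V$, via
\[
\dim B_k(W+V)-\dim B_kV=\dim \bigl(B_kW+B_kV\bigr)/B_kV.
\]
Both pieces have dimension smaller than $n$, so the induction hypothesis applies to each. We then need a factorization $C_{\rm BL}(\text{datum})\le C_{\rm BL}(\text{restriction})\cdot C_{\rm BL}(\text{quotient})$. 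We would prove it by Fubini: integrate first along the fibers $x+V$ using the restricted inequality, then apply the quotient inequality to the resulting fiber integrals, which are functions on the quotient spaces.

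We expect the main obstacles to be the compactness and decay argument in Case~1 and making the factorization in Case~2 rigorous for general measurable $f_k$. For Case~1, we would first try the flag decomposition together with the Busemann-type formula (\ref{eqn:Busemann_formula}): in the $\log$ of the Gaussian ratio, the dominant term along a degenerating sequence of $A_k$ is a Busemann function of the type $b_{S}$ with a sign determined by the subspace inequalities. Strict inequalities therefore give strictly negative slopes along every direction at infinity in the symmetric space, which yields the required decay.
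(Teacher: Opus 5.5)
\noindent Your sufficiency plan is sound in outline, but your necessity argument for the subspace inequalities fails as written.

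\noindent\textbf{The necessity step.} With $A_k=t\,P_{B_kV}+P_{(B_kV)^\perp}$ and $t\to\infty$, a lower bound $\det\bigl(\sum_ku_kB_k^\top A_kB_k\bigr)\ge c\,t^{\dim V}$ only gives an \emph{upper} bound $O\bigl(t^{\sum_ku_k\dim B_kV-\dim V}\bigr)$ on the Gaussian ratio. When the condition fails, this exponent is negative, so nothing is forced to blow up.

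Your claim that the form stays bounded on $V^\perp$ is also false for large $t$. For $w\in V^\perp$, $B_kw$ can have a component in $B_kV$; for example, $B_1=(1\ 1)$ and $V=\langle e_1\rangle$.

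Here is a concrete failure. Take $n=2$, $B_1=(1\ 0)$, $B_2=(0\ 1)$, $u=(3/2,1/2)$ and $V=\langle e_2\rangle$. The condition fails ($1/2<1$), yet along your family the ratio equals $\tfrac{4}{3}t^{-1/2}\to0$.

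\noindent\textbf{The fix.} Send $t\to0^+$ instead; equivalently, after the scaling invariance, make $A_k$ large on $(B_kV)^\perp$.
For $t\le1$ you have $Q_t:=\sum_ku_kB_k^\top A_kB_k\preceq Q_1$, so the compression of $Q_t$ to $V^\perp$ stays bounded. Its compression to $V$ is exactly $t$ times that of $Q_1$, because $B_kV$ is where $A_k$ acts by $t$.
Fischer's inequality then gives $\det Q_t\le\det(Q_t|_V)\det(Q_t|_{V^\perp})\le C\,t^{\dim V}$. Hence the ratio is at least $C^{-1}t^{\sum_ku_k\dim B_kV-\dim V}$, which is unbounded as $t\to0^+$ whenever $\sum_ku_k\dim B_kV<\dim V$. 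If $C=0$, then $Q_t$ is singular and $C_{\rm BL}=\infty$ anyway.
This is the Gaussian form of the usual test with functions spread out along $B_kV$.

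\noindent\textbf{Sufficiency and comparison with the paper.} Your sufficiency half is, in outline, the critical-subspace induction of \cite{BCCT08}: Lieb's Gaussian reduction, the Fubini factorization through a critical $V$, and induction on $n$. The paper gives no proof of this theorem; it only quotes it and combines it with (\ref{eqn:VXrelation}).

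In Case~1, note that the eigenspace filtrations of the $A_k$ live in the targets $\RR^{r_k}$, not in a single flag of $\RR^n$.
By Cauchy--Binet and the greedy rule, the asymptotic slope of the log-ratio is a nonnegative combination, over the growth-rate levels $\theta$ above the minimum, of $\sum_ku_k\dim W_k^\theta-\dim\sum_kB_k^\top W_k^\theta$. Here $W_k^\theta\subseteq\RR^{r_k}$ is spanned by the eigendirections of $A_k$ with growth rate at least $\theta$. The bottom level contributes zero by the scaling condition.

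What you actually need is therefore the packing form of the hypothesis. Set $X:=\sum_kB_k^\top W_k$ and $S_k=\image B_k^\top$; then $\dim W_k\le\dim(S_k\cap X)$. Via (\ref{eqn:VXrelation}), the absence of critical subspaces gives strict inequalities. The gap is uniform because only finitely many dimension vectors occur. Throughout, restrict to the $k$ with $u_k>0$.
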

The {\em BL-polytope} ${\rm BLP}({\cal B})$ associated with ${\cal B} = \{B_k\}_{k \in [m]}$ is defined as a rational polytope of
all vectors $u = (u_k) \in \RR^m_{\geq 0}$ satisfying the condition in Theorem~\ref{thm:BCCT}.
We suppose that each $B_k$ is given as an $r_k \times n$ matrix with full row rank.
The computational complexity of BL-polytopes is an intriguing open problem in theoretical computer science.
It is not known whether the membership problem for BL-polytopes is NP-hard or polynomial-time solvable.
The current best result is due to Garg et al.~\cite{GGOW18} who gave a polynomial-time membership algorithm 
for input rational vectors $u$ having a bounded common denominator of entries.
For the special case of rank-2 matrices $B_k$, 
Hirai et al.~\cite{HIOS2025} showed a polynomial-time membership algorithm by utilizing the connection 
to the fractional linear matroid parity.
As suggested in \cite[Lemma 7.1]{FranksSomaGoemans2022} (see also \cite[Proposition 4.3]{BezGauvanTsuji2025}),   
the inequalities  $\sum_{k=1}^m u_k \dim B_k V \geq \dim V$ 
are linked to the subspace packing condition 
$\sum_{k=1}^m u_k \dim S_k \cap X \leq \dim X$ as follows:
For $k \in [m]$, define subspaces $S_k \in {\cal S}(\RR^n)$ by 
\begin{equation}
S_k := \image B_k^{\top},
\end{equation}
Then $\dim S_k = r_k$.
Observe that $u \in (B_k V)^{\bot}$ if and only if $B_k^{\top} u \in V^{\bot}$, where $(\cdot)^{\bot}$ denotes 
the orthogonal complement with respect to the standard inner product.
Since $u \mapsto B_k^{\top}u$ is injective, it holds 
$\dim (B_k V)^{\bot} = \dim \{u \in V \mid B_k^{\top}u \in V^{\bot}\} = \dim \image B_k^{\top} \cap V^{\bot} = \dim S_k \cap V^{\bot}$.
With $X := V^{\bot}$ and $\dim B_k V = r_k - \dim S_k \cap X$, we have
\begin{eqnarray}
&& \sum_{k=1}^m u_k \dim B_k V - \dim V  \nonumber \\
&& =  - \sum_{k =1}^m u_k \dim S_k \cap X + \dim X - \left( n - \sum_{k=1}^m r_k u_k \right). \label{eqn:VXrelation}
\end{eqnarray}
The third term of RHS is zero if and only if $\dim \RR^n = \sum_{k=1}^{m}u_k\dim S_k \cap \RR^n$.
Thus, the BL-polytope is the face of the feasible region of (\ref{eqn:fractional_subspace_packing}).
In particular, it is
viewed as the polytope of ``perfect" fractional subspace packings.

To formulate approximate membership, 
we consider a relaxation of the BL-polytope.
For $\epsilon \geq 0$, the {\em $\epsilon$-relaxed BL-polytope} ${\rm BLP}_{\epsilon}({\cal B})$
is defined as the set of all vectors $u$ satisfying
$n \geq \sum_{k=1}^{m}u_kr_k \geq n-\epsilon$ 
and  $\sum_{k=1}^m u_k \dim B_k V \geq \dim V-\epsilon$ for every $V \in {\cal S}(\RR^n)$.
Consider LP~(\ref{eqn:dual_weighted}) with weight $r = (r_k)$ and capacity $u$.
From the relation~(\ref{eqn:VXrelation}), we have:
\begin{Lem}\label{lem:BLP_epsilon}
For $\epsilon \geq 0$ and $u \in [0,1]^m$, if $\tau_{r,u}^* \geq n- \epsilon$, then  ${\rm BLP}_{\epsilon}({\cal B}) \cap [0,u] \neq \emptyset$. 
\end{Lem}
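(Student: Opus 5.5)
The plan is to pass from the dual value $\tau^*_{r,u}$ back to a primal solution of LP~(\ref{eqn:weighted}) with weight $c = r$ and capacity $u$, and then show via the identity~(\ref{eqn:VXrelation}) that this primal solution lies in ${\rm BLP}_{\epsilon}({\cal B}) \cap [0,u]$.

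First, the primal optimum is attained. As noted after (\ref{eqn:dual_weighted}), LP~(\ref{eqn:weighted}) is implicitly a finite-dimensional LP, because the coefficient vectors (\ref{eqn:coefficient_vector}) range over a finite set. Its feasible region is nonempty (it contains $x = {\bf 0}$) and is bounded by $0 \leq x_k \leq u_k$. Hence strong duality gives an optimal $x$ with $\sum_k r_k x_k = \tau^*_{r,u} \geq n - \epsilon$ and $0 \leq x_k \leq u_k$, so $x \in [0,u]$.

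Next, we check the two conditions defining ${\rm BLP}_{\epsilon}({\cal B})$ for $x$.

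\textbf{The condition on $\sum_k x_k r_k$.} The packing constraint of (\ref{eqn:weighted}) with $X = \RR^n$ reads $\sum_k x_k \dim S_k \leq n$. Since $\dim S_k = r_k$, this gives $\sum_k x_k r_k \leq n$. Together with the previous paragraph, we get $n \geq \sum_k x_k r_k \geq n - \epsilon$.

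\textbf{The subspace inequalities.} Fix $V \in {\cal S}(\RR^n)$ and set $X := V^{\bot}$. Applying (\ref{eqn:VXrelation}) with $u$ replaced by $x$ gives
\[
\sum_k x_k \dim B_k V - \dim V = \Bigl(\dim X - \sum_k x_k \dim S_k \cap X\Bigr) - \Bigl(n - \sum_k r_k x_k\Bigr).
\]
The first bracket is nonnegative by the packing constraint for $X$. The second bracket is at most $\epsilon$ by the previous step. Hence $\sum_k x_k \dim B_k V \geq \dim V - \epsilon$, so $x \in {\rm BLP}_{\epsilon}({\cal B}) \cap [0,u]$.

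The only step needing any care is the attainment of the primal optimum, which is exactly the finite-reduction remark made after (\ref{eqn:dual_weighted}).
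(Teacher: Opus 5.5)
Your proof is correct and matches the paper's argument, which is only implicit in ``From the relation~(\ref{eqn:VXrelation}), we have.'' You take an optimal primal solution $x$ of LP~(\ref{eqn:weighted}) with $c=r$ and capacity $u$, so that $\sum_k r_k x_k=\tau^*_{r,u}\geq n-\epsilon$ and $x \in [0,u]$, and then read off both defining conditions of ${\rm BLP}_{\epsilon}({\cal B})$ from the packing constraints for $X=\RR^n$ and $X=V^{\bot}$ via~(\ref{eqn:VXrelation}); you also spell out two points the paper leaves implicit, namely that the primal optimum is attained and that $\sum_k r_k x_k\leq n$.
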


Consider a rational vector $u \in \QQ^m$ satisfying necessary conditions $\sum_{k=1}^m r_k u_k = n$ and $0 \leq u_k \leq 1$ $(k \in [m])$ for 
belonging to ${\rm BLP}({\cal B})$. 
Apply the incremental Busemann subgradient method for (\ref{eqn:Busemann_LP}) with accuracy $\epsilon > 0$.
If the obtained solution $(p,{\cal X})$ with $f(p,{\cal X}) \leq \tau^*_{r,u} + \epsilon$ also satisfies $n \leq f(p,{\cal X})$,
then there is $x \in {\rm BLP}_{\epsilon}({\cal B})$ with $x \leq u$, and necessarily $u \in {\rm BLP}_{\epsilon}({\cal B})$.
Otherwise $n >  f(p,{\cal X}) \geq \tau^*_{r,u}$.
This means that no feasible solution in (\ref{eqn:weighted}) satisfies $\sum_{k=1}^m r_k x_k = n$, and hence
$u \not \in {\rm BLP}({\cal B})$.
Thus, we have the following result on the membership of the BL-polytope, which does not depend on the common denominator of an input vector $u$.
\begin{Thm}
For $\epsilon >0$, 
we can decide in $O(n^7m^3 \epsilon^{-2})$ time 
whether a given rational vector $u \in \QQ^m_{\geq 0}$ belongs to
${\rm BLP}_{\epsilon}({\cal B})$ or does not belong to ${\rm BLP}({\cal B})$.  
\end{Thm}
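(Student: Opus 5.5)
The plan is to combine Theorem~\ref{thm:main_cu} with weight $c=r=(r_k)$ and capacity $u$, and then use the reasoning of the preceding paragraph and Lemma~\ref{lem:BLP_epsilon}.

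First I would check the necessary conditions in $O(m)$ time. If some $u_k>1$, then $u\notin{\rm BLP}({\cal B})$: taking $V$ with $B_kV=\{0\}$ and $\dim V=n-r_k$ gives one of the defining inequalities, and the constraint $x_k\le 1$ is implied in (\ref{eqn:fractional_subspace_packing}). To avoid relying on that, I would simply reduce to the setting of the preceding paragraph, where $u\in[0,1]^m$. If $\sum_k r_ku_k\neq n$, then $u\notin{\rm BLP}({\cal B})$ by Theorem~\ref{thm:BCCT}. So assume $0\le u\le 1$ and $\sum_k r_ku_k=n$.

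Next I would run the incremental Busemann subgradient method on (\ref{eqn:Busemann_LP}) with $c=r$, so $c_{\rm max}=\max_k r_k\le n$, and accuracy $\epsilon$. By Theorem~\ref{thm:main_cu}, this takes $O(c_{\rm max}^2n^2m^2\epsilon^{-2})=O(n^4m^2\epsilon^{-2})$ iterations, each costing $O(n^3m)$ field operations plus $O(nm\log n)$ rational operations. This gives $O(n^7m^3\epsilon^{-2})$ in total. We also evaluate $f(p,{\cal X})$ at the iterate attaining the minimum in Theorem~\ref{thm:incremental_subgradient}. Evaluating $f$ requires the $b_{S_k}$ values, which Lemma~\ref{lem:n^3} gives in $O(n^3)$ each. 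That is $O(n^3m)$ per iterate, so it fits within the same bound, or one can track the running minimum.

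Then I would split into two cases. If $f(p,{\cal X})\ge n$, then $\tau^*_{r,u}\ge f-\epsilon\ge n-\epsilon$, and Lemma~\ref{lem:BLP_epsilon} gives some $x\in{\rm BLP}_\epsilon({\cal B})$ with $x\le u$. To conclude that $u$ itself lies in ${\rm BLP}_\epsilon({\cal B})$, I would use monotonicity, which is the main point to verify. Increasing $x$ to $u$ preserves $\sum_k u_k\dim B_kV\ge\dim V-\epsilon$, since $\dim B_kV\ge0$. It also keeps $n\ge\sum_k r_ku_k\ge n-\epsilon$, because the sum equals $n$.

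If instead $f(p,{\cal X})<n$, then $\tau^*_{r,u}\le f<n$. A point of ${\rm BLP}({\cal B})\cap[0,u]$ would be feasible in (\ref{eqn:weighted}) with objective $\sum_k r_kx_k=n$, via (\ref{eqn:VXrelation}). This would contradict $\tau^*_{r,u}<n$, so $u\notin{\rm BLP}({\cal B})$.

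The only delicate step is the monotonicity argument together with confirming that the running time matches $O(n^7m^3\epsilon^{-2})$. Both look routine.
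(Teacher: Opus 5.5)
Your proposal is correct and follows the paper's argument. You run the method of Theorem~\ref{thm:main_cu} with weight $c=r$ (so $c_{\rm max}\le n$) and capacity $u$, then split on whether $f(p,{\cal X})\ge n$, using Lemma~\ref{lem:BLP_epsilon} plus monotonicity in the first case and feasibility of $u$ in (\ref{eqn:weighted}) with objective $n$ in the second. The paper leaves two of your details implicit: rejecting $u$ that violate $\sum_k r_ku_k=n$ or $u\le{\bf 1}$, and evaluating $f$ at the iterates in $O(n^3m)$ time each. Both are consistent with the paper and do not change the bound.
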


The BL-polytope ${\rm BLP}({\cal B})$ can be empty, that is, 
the BL-inequality (\ref{eqn:BL}) fails for every $u$ and every finite $C$.
This motivates us to consider the problem of deciding nonemptiness of the BL-polytope.
With $u ={\bf 1}$, apply Lemma~\ref{lem:BLP_epsilon} and Theorems~\ref{thm:main_cu} and \ref{thm:fpt_c}.
Then we have: 
\begin{Thm}
\begin{itemize}
\item[(1)] For $\epsilon >0$, 
we can decide in $O(n^7m^3 \epsilon^{-2})$ time 
whether ${\rm BLP}_{\epsilon}({\cal B}) \neq \emptyset$ or  
${\rm BLP}({\cal B}) = \emptyset$. 
\item[(2)] We can decide nonemptiness of 
${\rm BLP}({\cal B})$ in $O(n^{2n+7} m^3)$ time.
\end{itemize}
\end{Thm}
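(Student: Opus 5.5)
The plan is to specialize the machinery to $u={\bf 1}$ and weight $c=r=(r_k)$, and to reduce nonemptiness of ${\rm BLP}({\cal B})$ to the single inequality $\tau^*_{r,{\bf 1}}\ge n$. Here $r_{\max}\le n$, so $c_{\rm max}\le n$.

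\emph{Equivalence.} First I would verify that ${\rm BLP}({\cal B})\neq\emptyset$ if and only if $\tau^*_{r,{\bf 1}}\ge n$, and similarly that $\tau^*_{r,{\bf 1}}\ge n-\epsilon$ implies ${\rm BLP}_\epsilon({\cal B})\neq\emptyset$.
\begin{itemize}
\item If $\tau^*_{r,{\bf 1}}\ge n-\epsilon$, Lemma~\ref{lem:BLP_epsilon} with $u={\bf 1}$ gives ${\rm BLP}_\epsilon({\cal B})\cap[0,{\bf 1}]\ne\emptyset$.
\item Conversely, let $x\in{\rm BLP}({\cal B})$. Taking $V$ with $\dim B_kV$ small shows $x_k\le 1$: apply the inequality to $V=\ker B_k^{\perp}$-type choices, i.e.\ to $X=S_k$ through (\ref{eqn:VXrelation}). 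Then (\ref{eqn:VXrelation}) with $\sum r_kx_k=n$ shows that $x$ is feasible for (\ref{eqn:weighted}) with $u={\bf 1}$, with objective $\sum r_kx_k=n$. Hence $\tau^*_{r,{\bf 1}}\ge n$.
\item If $\tau^*_{r,{\bf 1}}\ge n$, take an optimal $x$ of (\ref{eqn:weighted}). The constraint for $X=\RR^n$ gives $\sum r_kx_k\le n$, so equality holds. Then (\ref{eqn:VXrelation}) turns the packing constraints into the BL inequalities, and $x\in{\rm BLP}({\cal B})$.
\end{itemize}

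\emph{Part (1).} Run Theorem~\ref{thm:main_cu} with $c=r$, $u={\bf 1}$ and accuracy $\epsilon$. This takes $O(n^2\cdot n^2m^2\epsilon^{-2})$ iterations, each costing $O(n^3m)$, for a total of $O(n^7m^3\epsilon^{-2})$. The output $(p,{\cal X})$ satisfies $\tau^*\le f(p,{\cal X})\le\tau^*+\epsilon$.
\begin{itemize}
\item If $f(p,{\cal X})\ge n$, then $\tau^*\ge n-\epsilon$, so ${\rm BLP}_\epsilon({\cal B})\ne\emptyset$.
\item Otherwise $\tau^*<n$, so ${\rm BLP}({\cal B})=\emptyset$ by the equivalence.
\end{itemize}

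\emph{Part (2).} Compute $\tau^*_{r,{\bf 1}}$ exactly by Theorem~\ref{thm:fpt_c}, in $O(c_{\rm max}^2n^{2n+5}m^3)=O(n^{2n+7}m^3)$ time, and test whether it is at least $n$.

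The main obstacle is the equivalence step. The bound $x_k\le1$ needs care; I would get it from the $X=S_k$ constraint, which via (\ref{eqn:VXrelation}) corresponds to a specific $V=S_k^{\perp}$. I would also have to check the identity with all $r_k$-terms correctly accounted for once $\sum r_kx_k=n$. The remaining steps are direct bookkeeping with the stated complexities.
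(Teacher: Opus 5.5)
Your proposal is correct and follows the paper's route: specialize to weight $c=r$ and capacity $u={\bf 1}$ (so $c_{\rm max}\le n$), then combine Lemma~\ref{lem:BLP_epsilon} with Theorem~\ref{thm:main_cu} for (1), and use the exact computation of Theorem~\ref{thm:fpt_c} for (2). Your explicit check that ${\rm BLP}({\cal B})\neq\emptyset$ iff $\tau^*_{r,{\bf 1}}\ge n$, with $x_k\le 1$ coming from the $X=S_k$ constraint (that is, $V=\ker B_k=S_k^{\perp}$, not $(\ker B_k)^{\perp}$ as your first phrasing suggests), just fills in what the paper leaves implicit.
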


These appear be the first algorithms that run on a real RAM with real inputs $({\cal B},u)$.
However, for rational inputs, assumption (A) may not be appropriate from the viewpoint of bit-complexity,
and a polynomial-time algorithm in the standard RAM model would be desirable.
Our algorithms also run on the RAM model when $\KK = \QQ$ 
but they cannot be polynomial-time algorithms, unfortunately.
This is because
successive flag operations (Lemma~\ref{lem:n^3})  
can cause an exponential explosion of the bit-length of 
the basis representing flag ${\cal X}$, even if the number of iterations is polynomially bounded.
This bit-complexity issue arises in the nc-rank computation algorithm in~\cite{HamadaHirai} 
and its generalization in~\cite{HIOS2025}. 
In these settings, the rational arithmetic can be avoided 
by a reduction to finite fields of polynomial size. We do not know whether 
such a finite-field reduction is applicable to the BL-polytope problems.
Further analysis must be left to future research.

\section{Relation to set packing and covering LPs}\label{sec:set_packing/cover}

In this section, 
we verify that, when each $S_k$ is a coordinate subspace, 
(\ref{eqn:fractional_subspace_packing}) and (\ref{eqn:fractional_hitting_subpace}) 
reduce to the usual primal-dual pair of set packing and covering LPs.
For a subset $J \subseteq [n]$, let $\KK^{J}$ denote 
the coordinate subspace $\{ v \in \KK^n \mid v_i = 0\ (i \in [n] \setminus J)\}$.
\begin{Lem}\label{lem:coordinate_subspace}
Let $X \in {\cal S}(\KK^n)$, and 
let $A$ be an $n \times k$ matrix whose column vectors are a basis of $X$.  
If the submatrix of $A$ with row indices  $i_1,i_2,\ldots,i_k$ is nonsingular, 
then for any $J \subseteq [n]$ it holds 
\begin{equation}\label{eqn:J}
\dim X \cap \KK^{J} \leq |\{i_1,i_2,\ldots,i_k\} \cap J| = \dim \KK^{\{i_1,i_2,\ldots,i_k\}} \cap \KK^J.
\end{equation}
\end{Lem}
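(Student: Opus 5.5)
The equality on the right of (\ref{eqn:J}) is immediate: the intersection of two coordinate subspaces $\KK^{I}\cap\KK^{J}$ is $\KK^{I\cap J}$, whose dimension is $|I\cap J|$. So the work is in the inequality, and the plan is to bound $\dim X\cap\KK^J$ by looking at the rows of $A$ outside $J$.

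Write $I:=\{i_1,\ldots,i_k\}$ and let $A_I$ denote the $k\times k$ submatrix of $A$ on the rows in $I$, which is nonsingular by assumption. Every vector of $X$ has the form $Aw$ with $w\in\KK^k$, and $w\mapsto Aw$ is an isomorphism onto $X$. Moreover, $Aw\in\KK^J$ if and only if $(Aw)_i=0$ for all $i\in[n]\setminus J$. Hence $\dim X\cap\KK^J$ equals the dimension of the kernel of the submatrix $A_{[n]\setminus J}$ of rows indexed by $[n]\setminus J$.

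That kernel lies inside the kernel of the smaller submatrix $A_{I\setminus J}$, since the rows of $A_{I\setminus J}$ are among those of $A_{[n]\setminus J}$. The rows of $A_{I\setminus J}$ are $|I\setminus J|$ rows of the nonsingular matrix $A_I$, so they are linearly independent. Therefore $A_{I\setminus J}$ has rank $|I\setminus J|$ and its kernel has dimension $k-|I\setminus J|=|I\cap J|$, which gives
\[
\dim X\cap\KK^J\le |I\cap J|.
\]
There is no real obstacle. The only point needing care is translating membership of $Aw$ in $\KK^J$ into the vanishing of the coordinates outside $J$, and then restricting to the rows in $I$, where nonsingularity gives full row rank.
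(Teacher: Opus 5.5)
Your proof is correct and is essentially the paper's argument. The paper re-chooses the basis of $X$ so that the first $\ell=\dim X\cap\KK^J$ columns span $X\cap\KK^J$; this forces the rows outside $J$ to have rank at most $k-\ell$, so any $k$ independent rows include at least $\ell$ rows from $J$. That is the same rank count you carry out via $\ker A_{[n]\setminus J}\subseteq\ker A_{I\setminus J}$, and your kernel formulation has the small advantage of not needing to check that changing the basis of $X$ keeps the row submatrix on $I$ nonsingular.
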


\begin{proof}
We can suppose that the first $\ell$ columns of $A$ form a basis of $X \cap \KK^{J}$, where 
$\dim X \cap \KK^{J} = \ell$.  
Then, $A_{ij} = 0$ holds for $(i,j) \in ([n] \setminus J) \times [\ell]$, and every maximal linearly independent subset of row vectors of $A$ must include 
$\ell$ rows in $J$. This implies~(\ref{eqn:J}).
%
%
%
\end{proof}
Therefore, if each $S_k$ is the coordinate subspace $\KK^{J_k}$ for $J_k \subseteq [n]$, 
then we can replace vector subspaces $X$ by the above coordinate subspaces $\KK^{\{i_1,i_2,\ldots,i_k\}}$
in 
(\ref{eqn:fractional_subspace_packing}), (\ref{eqn:fractional_hitting_subpace}), and the hitting subspace problem.
Since the inequality $\sum_{k} x_k |J_k \cap I| \leq |I|$ for $\KK^I$ is implied 
by $\sum_{k =1}^m x_k |J_k \cap \{i\}| \leq 1$ for $\KK^{\{i\}}$,  (\ref{eqn:fractional_subspace_packing}) is the set packing LP:
\begin{equation}\label{eqn:set-packing_LP}
\mbox{Max.}\quad {\bf 1}^{\top} x \quad \mbox{s.t.} \quad \sum_{k=1}^m {\bf 1}_{J_k}x_k \leq {\bf 1},\ x \in \RR^n_{\geq 0}. 
\end{equation}
One can also see that (\ref{eqn:fractional_hitting_subpace}) is equivalent to the set cover LP:
\begin{equation}\label{eqn:set-covering_LP}
\mbox{Min.}\quad {\bf 1}^{\top} p \quad \mbox{s.t.} \quad {\bf 1}_{J_k}^{\top} p \geq 1 \ (k \in [m]),\ p \in \RR^n_{\geq 0}. 
\end{equation}
To see this, from $\lambda$, 
consider $p$ by $p_i := \sum_{X \subseteq [n]: i \in X}\lambda(X)$.

This relation has the following implication for BL-polytopes.
\begin{Prop}
    Suppose that each full-row-rank matrix $B_k \in \RR^{r_k \times n}$
    is a coordinate projection; that is, there exists $J_k \subseteq [n]$ such that  $|J_k| = r_k$ and $(B_k)_{ij} = 0$ 
    for all $(i,j) \in [r_k] \times ([n] \setminus J_k )$.
    Then ${\rm BLP}({\cal B})$ is equal to 
    the perfect fractional set-packing polytope:
    \begin{equation}\label{eqn:perfect_set-packing}
    {\rm BLP}({\cal B}) = \left\{ u \in \RR^m_{\geq 0} \mid \sum_{k=1}^m {\bf 1}_{J_k} u_k = {\bf 1}\right\}.
     \end{equation}
\end{Prop}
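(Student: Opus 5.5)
The plan is to pass through the reformulation (\ref{eqn:VXrelation}) and reduce the BL condition to the set-packing inequalities, using Lemma~\ref{lem:coordinate_subspace}.

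First we identify $S_k$. Since $B_k$ has full row rank $r_k$ and vanishes outside the columns in $J_k$ with $|J_k| = r_k$, the row space $S_k = \image B_k^{\top}$ lies in $\RR^{J_k}$ and has dimension $r_k$, so $S_k = \RR^{J_k}$. By Theorem~\ref{thm:BCCT} and (\ref{eqn:VXrelation}) (with $X = V^{\bot}$ ranging over all subspaces as $V$ does), $u \in {\rm BLP}({\cal B})$ if and only if $u \geq {\bf 0}$, $\sum_k r_k u_k = n$, and
\begin{equation*}
\sum_{k=1}^m u_k \dim (\RR^{J_k} \cap X) \leq \dim X \quad (X \in {\cal S}(\RR^n)).
\end{equation*}

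For the inclusion ${\rm BLP}({\cal B}) \subseteq$ RHS of (\ref{eqn:perfect_set-packing}), take $X = \RR^{\{i\}}$ to get $\sum_k u_k ({\bf 1}_{J_k})_i \leq 1$ for every $i \in [n]$. Summing over $i$ gives $\sum_k u_k |J_k| = \sum_k r_k u_k \leq n$. Since $\sum_k r_k u_k = n$ holds, every one of these $n$ inequalities must be tight, so $\sum_k {\bf 1}_{J_k} u_k = {\bf 1}$.

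For the reverse inclusion, let $u \geq {\bf 0}$ with $\sum_k {\bf 1}_{J_k} u_k = {\bf 1}$. Summing coordinates gives $\sum_k r_k u_k = n$. For an arbitrary subspace $X$ of dimension $\ell$, choose a basis matrix $A$ and rows $i_1,\ldots,i_\ell$ forming a nonsingular submatrix; such rows exist because $A$ has rank $\ell$. Set $I := \{i_1,\ldots,i_\ell\}$. Lemma~\ref{lem:coordinate_subspace} gives $\dim (X \cap \RR^{J_k}) \leq |I \cap J_k|$, so
\begin{equation*}
\sum_k u_k \dim (X \cap \RR^{J_k}) \leq \sum_k u_k |I\cap J_k| = \sum_{i \in I} \sum_k u_k ({\bf 1}_{J_k})_i = |I| = \dim X.
\end{equation*}
This is exactly the BL condition, so $u \in {\rm BLP}({\cal B})$.

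The key step is the reduction to coordinate subspaces via Lemma~\ref{lem:coordinate_subspace}; everything else is bookkeeping through (\ref{eqn:VXrelation}). The one point needing care is the identification $S_k = \RR^{J_k}$, which uses the full-row-rank assumption together with $|J_k| = r_k$.
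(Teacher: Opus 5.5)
Your proof is correct and follows the paper's own derivation. You identify $S_k=\RR^{J_k}$, use (\ref{eqn:VXrelation}) to view ${\rm BLP}({\cal B})$ as the face $\sum_k r_k u_k = n$ of the fractional subspace packing region, and invoke Lemma~\ref{lem:coordinate_subspace} so that only coordinate-subspace constraints matter, which in turn follow from the singleton ones (the paper cites Finner for $\supseteq$, but its reduction to the set-packing LP proves it just as you do).
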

Although the expression (\ref{eqn:perfect_set-packing}) appears to be new in the literature, 
the essential part~$(\supseteq)$ is (a part of) 
a result of Finner~\cite[Theorem 2.1]{Finner1992} on a generalization of H\"older's inequality. 
The reverse inclusion $(\subseteq)$ is immediate 
from the argument leading to (\ref{eqn:set-packing_LP}).
In particular, the membership problem for such BL-polytopes can be solved in polynomial time.

\section*{Acknowledgments}
The author thanks Yuya Goto, Yuni Iwamasa, Taihei Oki, and Tasuku Soma for comments,  
and Hiroshi Tsuji and Neal Bez for sharing Finner's result~\cite{Finner1992}.  
The author was supported 
by JSPS KAKENHI Grant Number JP24K21315 and JP26H01996.
The author used Microsoft Copilot to assist with literature survey, organization of ideas, and language editing.
The author reviewed and revised all outputs and takes full responsibility for the content of the manuscript.
\bibliographystyle{plain}
\bibliography{frac_packing}

\end{document}